\documentclass[11pt,fleqn]{article}

\usepackage{subcaption}

\usepackage{amsfonts}
\usepackage{amsmath,amssymb}
\usepackage{amsthm}
\usepackage{algorithmic}
\usepackage{array}
\usepackage{epsf}
\usepackage{graphicx,psfrag,color,pst-grad,pstcol}
\usepackage{latexsym}
\usepackage{calc}
\usepackage{multicol}

\def\qed{\hfill \mbox{\rule[0pt]{1.5ex}{1.5ex}} \vskip.1cm}
\def\R{\mathbb{R}}

\def\cL{{\cal L}}

\def\cT{{\cal T}}

\def\cB{{\cal B}}

\def\cP{{\cal P}}

\def\cM{{\cal M}}

\def\cQ{{\cal Q}}

\def\qed{\hfill \mbox{\rule[0pt]{1.5ex}{1.5ex}}\vskip.3cm}

\newtheorem{remark}{Remark}
\newtheorem{definition}{Definition}
\newtheorem{theorem}{Theorem}

\newtheorem{prop}{Proposition}
\newtheorem{property}{Property}

\newtheorem{assumption}{Assumption}

\begin{document}

\def\BibTeX{{\rm B\kern-.05em{\sc i\kern-.025em b}\kern-.08em
    T\kern-.1667em\lower.7ex\hbox{E}\kern-.125emX}}

\title{\LARGE \bf Mixed potential for nonlinear RLC circuits with memristors}

\author{Mauro Di Marco, Mauro Forti, Luca Pancioni,\\ Giacomo Innocenti, Alberto Tesi
\thanks{M. Di Marco, M. Forti and L. Pancioni are with the Department of Information Engineering and Mathematics, University of Siena, v. Roma 56 - 53100 Siena, Italy, e--mail: {\tt\small mauro.dimarco@unisi.it, mauro.forti@unisi.it, luca.pancioni@unisi.it}.
G. Innocenti and A. Tesi are with the Department of Information Engineering,
University of Florence, via S. Marta 3 - 50139 Firenze, Italy, e--mail: {\tt\small giacomo.innocenti@unifi.it,alberto.tesi@unifi.it}.
}
}

\maketitle

\begin{abstract}
In two seminal articles published in 1964, Brayton and Moser introduced the
concept of a \emph{mixed potential} as a fundamental
theoretic tool to describe and analyze a class RLC of nonlinear
circuits containing resistors, capacitors and inductors. In this paper, it is shown
for the first time that a mixed potential can be introduced for a class RLCM of RLC circuits
containing also memristors. This is possible provided a memristor circuit is analyzed
not in the traditional voltage-current domain but rather
in the flux-charge domain. The flux-charge analysis method (FCAM) plays a crucial
role in the extension, in particular, a key step is an equivalence principle established
via FCAM between an RLCM circuit in the flux-charge domain and a nonlinear RLC
circuit in the voltage-current domain. Several examples are discussed where
the mixed potential is explicitly found. These include basic circuits with memristors,
such as Chua's circuit with a memristor and also large-scale memristor arrays with
a neural architecture. This paper is mainly devoted to the introduction of a mixed
potential for memristor circuits and the study of its main theoretic properties,
as the possibility to write the circuit state equations in the flux-charge domain
in an effective and compact form via the mixed potential. In
a companion paper \cite{DiMarco2026BraytonAppl}, the mixed potential is used to
obtain in a systematic way Lyapunov-like results on convergence of RLCM circuits. Those results
will extend existing results on convergence that do not cover the important
case where there is the simultaneous
presence of capacitors and inductors in a memristor circuit.
\end{abstract}

{\bf keywords:} Convergence, flux-charge analysis method,
invariant of motion, memristor, mixed potential, nonlinear circuit, nonlinear dynamics.

\section{Introduction}

Memristor has been devised by L. Chua as the fourth basic passive circuit element
in addition to the resistor, capacitor and inductor in a seminal paper published
in 1971 \cite{Chua1971}. However, about 40 years passed to arrive at the discovery of memristive
behavior at nanoscale by a team led by S. Williams \cite{Williams2008}.
Nowadays, emerging electronic devices
such as memristors are expected to play an ever increasing role in modern electronics
and neuromorphic circuits \cite{boybat2018neuromorphic,huang2021editorial,Sirakoulis2022717,yang2022ResProgMem,Shao2025BurstingJosephson,Sun2025ReviewNeuromorphicC}. The possibility to program the memristor
conductance makes them tailor made to implement neuron interconnections.
Furthermore, since memristors are able to keep in memory their final state
also when power is disconnected, they are suitable to implement effective
\emph{in-memory} computing schemes where the processing and memorization are at
the same physical location \cite{ielmini2020device,ascoli2020theoretical,10518003,SpecIssue-InMemory2023,11053213,Xiao20242294TCASI,Yang20254127}. Such a property appears to be crucial to
circumvent the so-called von Neumann bottleneck of traditional
computer architectures originating from data that
are continuously exchanged between the central processing unit (CPU)
and the memory, e.g., a random access memory (RAM). From a broader
perspective, memristor circuits are believed to excel at mimicking the
parallel and efficient processing capabilities of brain-like nervous
systems \cite{Xiaoqi2024JMSE,Shan20241815EDL,Zou2025AEM,Sun2025LightSensing,Shi2025NeuronalIIordMem}.

A systematic technique has been developed for the study of nonlinear circuits
containing memristors in \cite{Corinto-Forti-I,cfc2020}. This is based on the analysis in
the integral domain given by the flux and charge rather than the
traditional domain given by voltage and current. Thus, the name Flux-Charge
analysis method (or, FCAM, for short). The method proved effective to
highlight relevant peculiar dynamical features of memristor circuits, such as the
existence for structural reasons of invariants of motion for the dynamic
equations in the voltage-current domain (VCD). As a consequence, the
state space in the VCD can be foliated in invariant manifolds where a
memristor circuit satisfies a manifold-dependent reduced-order dynamics
that can be completely characterized in the flux-charge domain (FCD).
In turn, this implies that a memristor circuits can
display infinitely many different coexisting dynamics and attractors for
a fixed set of circuit parameters and nonlinearities, a
property termed in the literature extreme multistability \cite{Corinto-Forti-II}.
Peculiar to this type of dynamics is also the presence of bifurcations due
to changing the initial conditions for fixed circuit parameters and
nonlinearities (bifurcations without parameters).

FCAM has been applied
for a detailed analysis of the bifurcations and oscillatory phenomena of several basic circuits
as Murali-Lakshmanan-Chua circuit and Chua's circuit with a memristor
\cite{cfc2020,Corinto-Forti-II}. Other contributions
concern the analysis of convergence towards equilibrium points of
a number of neural network
architectures using memristors in the cells or the interconnections \cite{di2017memristor,7516733,DiMarco2020LQprog,di2022convergence,10144925,DiMarco2025convArxiv}.
In particular, a recent paper \cite{di2025robust} used FCAM
to establish systematic and robust results on convergence towards equilibrium points for a broad
class of nonlinear circuits, termed RCM, containing \emph{three basic
circuit elements}, i.e., resistors, capacitors and memristors. It is worth
to note that, to our knowledge, no contributions in the literature address
in a systematic way convergence when capacitors and inductors are simultaneously present.
We refer the reader to \cite{chen2019flux,AlChawa2021compact,Soundararajan2025output}
and references cited therein for other applications and extensions of FCAM.

The dynamic analysis of nonlinear RLC circuits (without memristors) has been one
of the main topics in circuit theory since the sixties
\cite{Desoer1972SpecialIssue,Willson1975nonlinear,Liu1980special,chua1980,chua-book}.
Here, we are mainly interested in the theory developed by Brayton and Moser
in the two seminal papers \cite{brayton1964theoryI,brayton1964theoryII}
to describe and analyze a large class of
nonlinear RLC circuits containing resistors, inductors and capacitors. One main
result is that, when the capacitor voltages and inductor currents are
 a complete set of variables, then the state equations can be described via a suitable
function termed \emph{mixed potential}. This result, in addition
to being of great
theoretic significance, paves the way to obtain nice Lyapunov-like results on
convergence towards equilibrium points for nonlinear
RLC circuits \cite{brayton1964theoryII}. The mixed potential theory has been later generalized and applied in several fields such as stability analysis of power networks and DC microgrids \cite{LiuTSC2023-BMLarge,ChenEle2022-LyapMixed}, passivity-based control of power electronic systems \cite{KosarayuTAC2021}, distributed and large-scale system modeling \cite{Macchelli2016brayton}, stability and control in modern power grids \cite{ChangTSG2021Region} and optimization of non-electrical systems \cite{deRinaldis2005electrmech,kosarajuCSL2017stability}. However, to our knowledge, no attempt has been done so far to extend the mixed potential to classes of nonlinear circuits containing
memristors.

Here, we consider a class of nonlinear circuits, termed RLCM, containing
\emph{all four basic circuit elements}, i.e., resistors, inductors, capacitors
and memristors. On one hand, this class generalizes the class RLC considered by Brayton
and Moser \cite{brayton1964theoryI} by allowing for the presence of memristors. On the other hand, it also
extends the class RCM studied in \cite{di2025robust}, where only one type of reactive elements is taken into account.
Considering inductors in a circuit is of course of great theoretic interest, since,
as it is well-known, the simultaneous presence of both electric and magnetic field
greatly enriches the
dynamical behavior. Furthermore, from a more practical viewpoint,
small parasitic inductors need to be
necessarily accounted for in any electronic
implementation, especially when the circuits operate at
high frequencies.

The main contributions in the paper can be summarized as follows:

(a) we extend the
fundamental notion of mixed potential, originally introduced by Brayton and
Moser for RLC circuits without memristors, to the considered class RLCM
of memristor circuits.
This extension is possible provided we represent the memristor circuits in the FCD
and it holds under the assumption that capacitor charges and memristor
fluxes are a complete set of state variables in the FCD;

(b) we show that, via the extended mixed potential, the SEs in the FCD assume
a particularly compact and elegant form. Moreover, for some special yet
relevant subclasses of RLCM circuits, we highlight how
the mixed potential is related to the invariants of motion and invariant
manifolds that characterize the dynamics in the VCD;

(c) we discuss the relation between the mixed potential and
a fundamental concept in circuit theory, i.e., the \emph{reciprocity properties} of the
multi-port networks used in the analysis of the RLCM circuits;

(d) we illustrate the obtained results by explicitly finding the mixed
potential of basic nonlinear RLCM circuits and also high-dimensional RLCM
circuits with a neural-like architecture.

It is worth to stress that the use of FCAM plays a fundamental role
to show the existence of a mixed potential.
Indeed, a crucial step in the analysis is an equivalent principle established
in the paper via FCAM between the considered class of nonlinear RLCM circuits
in the FCD and nonlinear RLC circuits in the VCD.

This paper is mainly devoted to the theoretical aspects and the basic properties
of the mixed potential for the considered class of RLCM circuits.
In a companion paper \cite{DiMarco2026BraytonAppl}, we investigate on more practical
yet relevant aspects implied by the existence of a mixed potential. In particular,
we will address how to systematically derive convergence results for subclasses of
RLCM circuits, with the aim to extend the convergence results obtained in
\cite{di2025robust}
and other contributions in the literature \cite{HUIJZER2025106156,10141998} to memristor circuits
containing both capacitors and inductors.

The paper is organized as follows. In Sect.\ \ref{sect:class_RLCM} we introduce the class RLCM of memristor
circuits studied in the paper, while in Sect.\ \ref{sect:complete_variables} we discuss the main
hypothesis of a complete set of circuit variables to describe RLCM circuits. The SEs describing the dynamics
are derived in Section\ \ref{sect:SEs}, while Sect.\ \ref{sect:P for RLCM} introduces the mixed
potential for RLCM circuits. Two special classes of RLCM circuits are introduced in
Sect.\ \ref{sect:special}. Section\ \ref{sect:disc} discusses the significance of the obtained
results and Sect.\ \ref{sect:examples} illustrates
the main results with a number of selected
examples. Finally, the main conclusions are collected in
Sect.\ \ref{sect:concl}.

\ \vskip1.cm

\section{Class RLCM of Memristor Circuits}
\label{sect:class_RLCM}

\subsection{Representation in the VCD}

In the paper, we consider a class of nonlinear circuits $\mathfrak{N}$, denoted RLCM, containing the
following elements:

\begin{itemize}
\item[$\bullet$] $n_R$ ideal resistors $v_{Ri}=R_i i_{Ri}$;
  \item[$\bullet$] $n_e$ independent voltage sources $v_i(t)=e_{i}(t)$ and $n_a$ independent current sources $i_i(t)=a_i(t)$;
  \item[$\bullet$] $n_C$ capacitors
\begin{equation}\label{Ci}
    i_{Ci}=C_i \dot v_{Ci}
\end{equation}
and $n_L$ inductors
\begin{equation}\label{Li}
    v_{Li}=L_i \dot i_{Li};
\end{equation}
    \item[$\bullet$] $n_{\Phi}$ ideal flux-controlled memristors, denoted $M_{\Phi i}$, which are defined according to L. Chua \cite{Chua1971} by the constitutive relation (CR)
  $$
  Q_{Mi}= \hat Q_{Mi}(\Phi_{Mi})
  $$
  where
  $\Phi_{Mi}(t)=\int_{-\infty}^t v_{Mi}(\sigma)d\sigma$ is the \emph{flux} (or voltage momentum) and
  $Q_{Mi}(t)=\int_{-\infty}^t i_{Mi}(\sigma)d\sigma$ is the \emph{charge} (or current momentum). Moreover,
  $\hat Q_{Mi}: \R \to \R$ is the nonlinear memristor characteristic. We suppose henceforth that
  $\hat Q_{Mi}(0)=0$ and $\hat Q_{Mi} \in C^1(\R)$. Differentiating in time we
  obtain that the memristor satisfies in the traditional VCD the following CR given by a differential algebraic equation
  \begin{equation}\label{FCmemVCD}
    \left\{
      \begin{array}{ll}
        i_i= \hat Q_{Mi}'(\Phi_{Mi})v_i\\
        \dot \Phi_{Mi}=v_i
      \end{array}
    \right.
  \end{equation}
  where $\hat Q_{Mi}'(\Phi_{Mi})$, whose dimension is Ohm$^{-1}$, is termed state-dependent memductance;

  \item[$\bullet$] $n_{Q}$ ideal charge-controlled memristors, denoted $M_{Qi}$, with CRs $\Phi_{Mi}= \hat \Phi_{Mi}(Q_{Mi})$, where
   $\hat \Phi_{Mi}: \R \to \R$ is the nonlinear memristor characteristic such
   that $\hat \Phi_{Mi}(0)=0$ and $\hat \Phi_{Mi} \in C^1(\R)$. Differentiating
in time we
  obtain that the memristor satisfies in the VCD the following CR
  \begin{equation}\label{FCmemVCDchar}
    \left\{
      \begin{array}{ll}
        v_i= \hat \Phi_{Mi}'(Q_{Mi})i_i\\
        \dot Q_{Mi}=i_i
      \end{array}
    \right.
  \end{equation}
  where $\hat Q_{Mi}'(\Phi_{Mi})$, whose dimension is Ohm, is termed state-dependent memristance.

\end{itemize}

\subsection{Representation in the FCD}
\label{sect:FCD}


In the paper, a central role is played by the
Flux-Charge Analysis Method (FCAM) developed in \cite{Corinto-Forti-I,cfc2020}. This is based on using as electric quantities of each circuit element the \emph{incremental flux}
$ \varphi(t)=\int_{0}^t v(\sigma)d\sigma$ and \emph{incremental charge}
$q(t)=\int_0^t i(\sigma)d\sigma$, where we assume $t=0$ as the initial instant.
A circuit $\mathfrak{N}$ in the class RLCM can be analyzed
in the FCD by means of the CRs of circuit elements,
Kirchhoff-flux-law (K$\varphi$L) and Kirchhoff-charge-law (K$q$L) for incremental quantities.

The CRs of circuit elements in terms of incremental quantities are described in \cite{Corinto-Forti-I} and they are reported for convenience below.\footnote{Henceforth, for simplicity we omit the adjective incremental whenever there is no ambiguity.}

\begin{figure}[t]
  \centering
\includegraphics[width=0.8\linewidth]{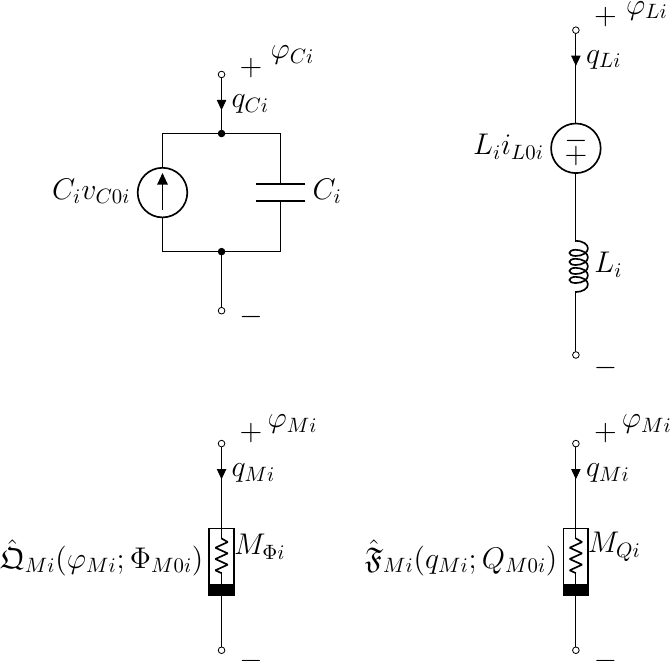}
\caption{\small Equivalent circuit of a capacitor (upper left plot) and an inductor (upper
right plot) in the FCD. A flux-controlled memristor in the FCD is the equivalent of an initial-condition
dependent voltage-controlled nonlinear resistor in the VCD (lower left plot), while a
charge-controlled memristor in the FCD is the equivalent of an initial-condition
dependent current-controlled nonlinear resistor in the VCD (lower right plot).}
\label{fig:eleFCD}
\end{figure}

\begin{itemize}

\item
A resistor is described by $\varphi_{Ri}=R_i  q_{Ri}$;

\item
an independent current (resp., voltage) source  satisfies $q_i(t)=\int_0^t a_i(\sigma)d\sigma$ (resp., $ \varphi_i(t)=\int_0^t e_i(\sigma)d\sigma$);

\item
a capacitor is described by
\begin{equation}\label{Ci FCD}
    C_i \frac{d \varphi_{Ci}}{dt}= q_{Ci}+C_iv_{C0i}
\end{equation}
where $v_{C0i}=v_{Ci}(0)$ is the initial capacitor voltage.
Its equivalent circuit in the FCD is shown in Fig.\ \ref{fig:eleFCD}.

\item
An inductor is described by
\begin{equation}\label{Li FCD}
    L_i \frac{d q_{Li}}{dt}= \varphi_{Li}+L_i i_{L0i}
\end{equation}
where $i_{L0i}=i_{Li}(0)$ is the initial inductor current.
Its equivalent circuit in the FCD is shown in Fig.\ \ref{fig:eleFCD}.

\item
A flux-controlled memristor $M_{\Phi i}$ satisfies
\begin{align}\label{Mi FCD}
\begin{split}
     q_{Mi}=&\hat{\mathfrak{Q}}_{Mi}(\varphi_{Mi};\Phi_{M0i})\\
 \doteq & \hat Q_{Mi}( \varphi_{Mi}+\Phi_{M0i})-\hat Q_{Mi}(\Phi_{M0i})
\end{split}
\end{align}
where $\Phi_{M0i}=\Phi_{Mi}(0)$ is the initial memristor flux, while a charge-controlled
memristor $M_{Qi}$ is defined by
\begin{align}\label{Mi FCD CC}
\begin{split}
     \varphi_{Mi}=&\hat{\mathfrak{F}}_{Mi}(q_{Mi};Q_{M0i})\\
     \doteq &
     \hat \Phi_{Mi}( q_{Mi}+Q_{M0i})-\hat \Phi_{Mi}(Q_{M0i})
\end{split}
\end{align}
where $Q_{M0i}=Q_{Mi}(0)$ is the initial memristor charge.

\end{itemize}

\subsection{Fundamental Equivalence Principle}
\label{sect:equiv}
Let us consider the analogies $\varphi \sim v$ and $ q \sim i$.
By comparing the CRs (\ref{Ci}) and (\ref{Ci FCD}), it is seen that a capacitor in the
VCD is the analogous of a capacitor in the FCD. Note, however, that its equivalent circuit
in the FCD (Fig.\ \ref{fig:eleFCD}) has an independent charge source accounting for the initial condition $v_{C0i}$ at $t=0$. Similar analogies hold for an inductor, a resistor and
for the two types of independent sources.

A flux-controlled memristor is described in the FCD by an algebraic
relationship $\hat{\mathfrak{Q}}_{Mi}(\cdot;\Phi_{M0i})$ between flux and charge (cf.\ (\ref{Mi FCD})).
This means that via the considered analogies a flux-controlled memristor in the FCD is the equivalent
of a nonlinear voltage-controlled resistor in the VCD. Clearly, the memristor is not equivalent to a nonlinear resistor in the VCD, since it satisfies the differential algebraic equation
(\ref{FCmemVCD}).
Another observation is that the nonlinear characteristic $\hat{\mathfrak{Q}}_{Mi}(\cdot;\Phi_{M0i})$
depends upon the initial memristor state $\Phi_{M0i}$. In conclusion, a flux-controlled memristor
in the FCD is the equivalent of an \emph{initial-condition-dependent} nonlinear voltage-controlled resistor
with characteristic $\hat{\mathfrak{Q}}_{Mi} (\cdot; \Phi_{M0i})$ in the VCD.
Analogous considerations hold for a charge-controlled memristor. See Fig.\ \ref{fig:eleFCD}.


On the basis of the previous analogies, a circuit $\mathfrak{N}$ in the class RLCM is the analogous in the FCD of a nonlinear RLC circuit in the VCD.
This simple yet important observation implies that we can use existing results in the literature for nonlinear RLC circuits to study the dynamical behavior of memristor circuits in the class RLCM.
When relying on this analogy, we should take into account that the CRs of the capacitors,
inductors and memristors in the FCD bring with them the information on the corresponding initial
conditions of the state variable in the VCD.

\section{Complete Set of Variables}
\label{sect:complete_variables}

In this section, we discuss some basic assumptions we will use for writing an SE representation of an RLCM circuit.
According to the results in \cite{chua1980}, we start by considering the following hypotheses.

\begin{assumption}
\label{assu:topo assu RLCM} Given a memristor circuit $\mathfrak{N}$ in the class RLCM, we suppose that:

 a) There is no loop made exclusively of capacitors and/or independent voltage sources and no cut-set made exclusively of inductors and/or independent current sources.

 b) Each flux-controlled memristor is in parallel to a capacitor and each charge-controlled memristor is in series with an inductor.

\end{assumption}

Condition\ a) is a standard topological assumption made in circuit theory to ensure that we can use capacitor voltages and inductor currents as independent variables to write SEs in the VCD. They also ensure we can use capacitor fluxes and inductor charges as independent variables to write SEs in the FCD. If needed, the unwanted loops and cut-sets can be removed via the techniques discussed in \cite[Sect.\ III]{chua1980} (details are omitted). Condition\ b) is once more a typical condition used to write the SEs (cf.\ \cite[Th.\ 2]{chua1980}).

\begin{figure}[t]
\begin{center}
\begin{subfigure}{0.7\columnwidth}
  \includegraphics[width=\linewidth]{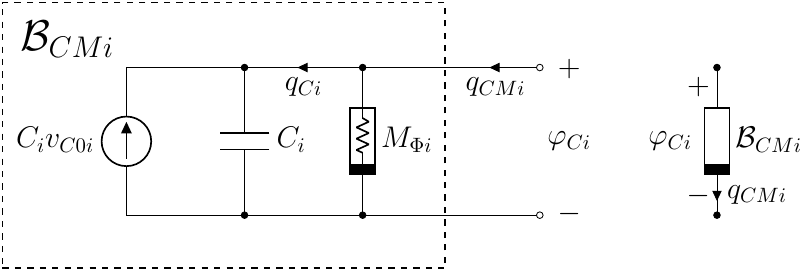}
  \caption{\small \small }
\end{subfigure}\\
\begin{subfigure}{0.7\columnwidth}
  \includegraphics[width=\linewidth]{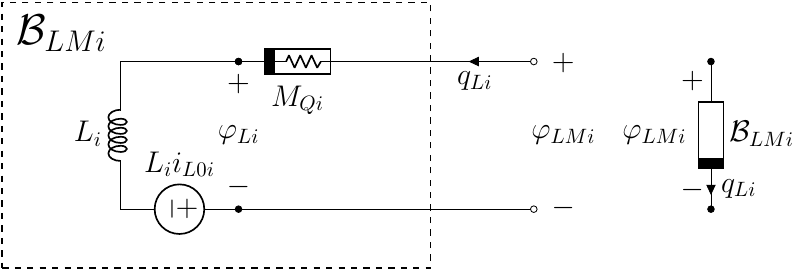}
  \caption{\small \small }
\end{subfigure}
\caption{\small (a) Representation in the FCD of a composite element $\cB_{CMi}$ given by
a capacitor in parallel to a flux-controlled memristor and (b) a composite
element $\cB_{LMi}$ given by an inductor in series with
a charge-controlled memristor.}
\label{fig:composite}
\end{center}
\end{figure}

Consider now the equivalent circuit of $\mathfrak{N}$ in the FCD.
Due to Assumption\ \ref{assu:topo assu RLCM}(b), we suppose henceforth that $\mathfrak{N}$
contains the following two terminal elements:

(a) $n_\Phi$ composite two-terminal elements, which are termed $\cB_{CMi}$, $i=1,\dots,n_\Phi$, where $0 \leq n_\phi \le n_C$, given by a capacitor
in parallel to a flux-controlled memristor (cf.\ Fig.\ \ref{fig:composite}(a)) and $n_C-n_\Phi$ elements $\cB_{Ci}$,
$i=n_{\Phi}+1,\dots,n_C$,
given by a capacitor without a memristor in parallel;

(b) $n_Q$ composite two-terminal elements, which are termed $\cB_{LMi}$, $i=1,\dots,n_Q$, where $0 \leq n_Q \le n_L$, given by an inductor in series with a charge-controlled memristor (cf.\ Fig.\ \ref{fig:composite}(b)) and $n_L-n_Q$ elements $\cB_{Li}$,
$i=n_Q+1,\dots,n_L$, given by an inductor without a memristor
in series;

(c) $n_R$ resistors, $n_e$ independent flux sources and $n_a$ independent charge sources.

From Fig.\ \ref{fig:composite}, each element $\cB_{CMi}$ satisfies the CR ($i=1,\dots,n_\Phi$)
$$
q_{CMi}=C_i \dot \varphi_{Ci}+\hat Q_{Mi}(\varphi_{Ci}+\Phi_{M0i})-\hat Q_{Mi}(\Phi_{M0i})-C_iv_{C0i}.
$$
In view of the analysis that follows, we find it convenient to put together the elements in point (a) and use vector notations to describe their CR. To this end, let

\begin{itemize}

\item[$\bullet$] $\varphi_C=(\varphi_{Ci})_{i=1,\dots,n_C} \in \R^{n_C}$;

\item[$\bullet$] $q=((q_{CMi})_{i=1,\dots,n_\Phi}, (q_{Ci})_{i=n_\Phi+1,\dots,n_C}) \in \R^{n_C}$;

\item[$\bullet$] $\varphi_{CM}=(\varphi_{Ci})_{i=1, \dots,n_\Phi} \in \R^{n_\Phi}$;

\item[$\bullet$] $\hat Q_{MC}(\cdot)=(\hat{Q}_{Mi}(\cdot))_{i=1, \dots,n_\Phi}: \R^{n_\Phi} \to \R^{n_\Phi}$;

\item[$\bullet$] $\Phi_{M0}=(\Phi_{M0i})_{i=1, \dots,n_\Phi} \in \R^{n_\Phi}$;

\item[$\bullet$] $v_{C0}=(v_{C0i})_{i=1,\dots,n_C} \in \R^{n_C}$;

\item[$\bullet$] $C={\rm diag }(C_1,\dots,C_{nC}) \in \R^{n_C \times n_C}$.
\end{itemize}

Then, the CR can be compactly written as
\begin{equation}\label{CR CM C}
    q
    =C \dot
     \varphi_C
    +\begin{pmatrix}
       \hat Q_{MC}( \varphi_{CM}+\Phi_{M0})-
       \hat Q_{MC}(\Phi_{M0})\\
       0 \\
     \end{pmatrix}
     -C v_{C0}.
\end{equation}

Similarly, from Fig.\ \ref{fig:composite}, each element $\cB_{LMi}$ satisfies the CR ($i=1,\dots,n_Q$)
$$
\varphi_{LMi}=L_i \dot q_{Li}+\hat \Phi_{Mi}(q_{Li}+Q_{M0i})-\hat \Phi_{Mi}(Q_{M0i})-L_ii_{L0i}.
$$
To use vector notations, let

\begin{itemize}

\item[$\bullet$] $q_L=(q_{Li})_{i=1,\dots,n_L} \in \R^{n_L}$;

\item[$\bullet$] $\varphi=((\varphi_{LMi})_{i=1,\dots,n_Q}, (\varphi_{Li})_{i=n_Q+1,\dots,n_L}) \in \R^{n_L}$;

\item[$\bullet$] $q_{LM}=(q_{Li})_{i=1, \dots,n_Q} \in \R^{n_Q}$;

\item[$\bullet$] $\hat \Phi_{ML}(\cdot)=(\hat \Phi_{Mi}(\cdot))_{i=1, \dots,n_Q}: \R^{n_Q} \to \R^{n_Q}$;

\item[$\bullet$] $Q_{M0}=(Q_{M0i})_{i=1, \dots,n_Q} \in \R^{n_Q}$;

\item[$\bullet$] $i_{L0}=(i_{L0i})_{i=1,\dots,n_L} \in \R^{n_L}$;

\item[$\bullet$] $L={\rm diag }(L_1,\dots,L_{nL}) \in \R^{n_L \times n_L}$.

\end{itemize}

Then, the CR of the element in point\ (b) can be written as
\begin{equation}\label{CR LM L}
     \varphi
    =L \dot q_L
    +\begin{pmatrix}
       \hat \Phi_{ML}( q_{LM}+Q_{M0})-
       \hat \Phi_{ML}(Q_{M0})\\
       0 \\
     \end{pmatrix}
     -L i_{L0}.
\end{equation}

To proceed, we need to introduce the concept of a \emph{complete set of electric variables} for an RLCM circuit $\mathfrak{N}$ in the FCD. Consider the analogy between an RLCM circuit in the FCD and
an RLC circuit in the VCD. Then, based on \cite{brayton1964theoryI}, we define a complete set as follows.

\begin{definition}
\label{def:complete_set}
Capacitor fluxes $\mathbb{F}_C=(\varphi_{Ci})_{i=1,\dots,n_C} \in \R^{n_C}$
and inductor charges $\mathbb{Q}_L=(q_{Li})_{i=1,\dots,n_L} \in \R^{n_L}$
are a complete set of variables for an RLCM circuit $\mathfrak{N}$ in the FCD if they can be chosen independently without violating Kirchhoff laws. Moreover, either the flux or charge of each element in $\mathfrak{N}$ can be found via $\mathbb{F}_C$ and/or $\mathbb{Q}_L$ directly from Kirchhoff laws, i.e., without using the CR of the element.
\end{definition}



Let us discuss how we can check in practice if $\mathbb{F}_C$ and $\mathbb{Q}_L$ are a complete set.
Consider $\mathfrak{N}$ and an associated directed graph (digraph). Suppose without loss of generality that the graph is connected and there are $b$ branches and $n$ nodes.
Clearly, $\mathbb{F}_C$ is a complete set of variables if there are $n-1$ capacitors and the elements $\cB_{CM}$ together with $\cB_C$ are a maximal tree $\cT$, i.e., they are a connected set spanning all nodes without forming loops \cite{chua-book}. Dually, $\mathbb{Q}_L$ is a complete set of variables if there are $b-n+1$ inductors and the elements $\cB_{LM}$ and $\cB_L$ form a co-tree
$\cL$. To check whether $\mathbb{F}_C$ together with $\mathbb{Q}_L$ form a complete set of variables, we consider the next conditions.

\begin{assumption}
\label{assu:case 2 RLCM 1}
There is a tree $\cT$ and a corresponding co-tree $\cL$ such that:

1) all elements $\cB_{CM}$ and $\cB_C$ belong to $\cT$ and the remaining elements of $\cT$ are charge-controlled (resistors and/or independent flux-sources);

2) all elements $\cB_{LM}$ and $\cB_L$ belong to $\cL$ and the remaining elements of $\cL$ are flux-controlled (resistors and/or independent charge sources);

3) each resistor and each independent charge source in $\cL$ forms a loop exclusively with elements $\cB_{CM}$ and/or $\cB_C$.
\end{assumption}

Due to the equivalence principle in Sect.\ \ref{sect:equiv}, and the discussion in
\cite[p.\ 4]{brayton1964theoryI}, see also ~\cite[p.\ 214]{chua1978complete}, it is seen that Assumption\ \ref{assu:case 2 RLCM 1}
guarantees that $\mathbb{F}_C$ and $\mathbb{Q}_L$ are a complete set of variables for
$\mathfrak{N}$.

\begin{remark}
In practice, to check if $\mathbb{F}_C$ and $\mathbb{Q}_L$ are a complete set of variables we need
to find a tree such that Assumption\ \ref{assu:case 2 RLCM 1} is satisfied. However,
often it is possible to check by inspection that this property holds by directly
looking at the circuit structure, see Sect.\ \ref{sect:examples} for specific examples.
We also note that, under Assumption\ \ref{assu:case 2 RLCM 1}, it follows that condition\ a) of
Assumption\ \ref{assu:topo assu RLCM} holds.
\end{remark}

\begin{remark}
It is easy to show that $(v_{Ci})_{i=1,\dots,n_C}$ and $(i_{Li})_{i=1,\dots,n_L}$ are a complete set for $\mathfrak{N}$ in the VCD if and only
if $\mathbb{F}_C$
and $\mathbb{Q}_L$ are a complete
set of variables for $\mathfrak{N}$ in the FCD (details are omitted).
\end{remark}

%
%

\section{State Equations}
\label{sect:SEs}

Goal of this section is to find an SE representation of a circuit $\mathfrak{N}$ in the class RLCM, first in the FCD and then in the VCD, under the assumption that $\mathbb{F}_C$ and $\mathbb{Q}_L$ are a complete set of variables.
Thereafter, in Sect.\ \ref{sect:P for RLCM}, we introduce an effective representation of the SEs in the FCD via a \emph{mixed potential}.


\subsection{Kirchhoff Laws in the FCD for a Complete Set of Variables}

Consider a memristor circuit $\mathfrak{N}$ in the class RLCM and an associated digraph.
Let $\cT$ be a tree and $\cL$ the corresponding co-tree.
Let $\varphi_\cT \in \R^{n-1}$, $q_\cT \in \R^{n-1}$ be the tree fluxes and charges, respectively. Moreover, let $\varphi_\cL \in \R^{b-n+1}$, $q_\cL \in \R^{b-n+1}$ be the co-tree fluxes and charges, respectively.
Due to classic results in circuit theory, we have the following (see, e.g., \cite{chua-book}).
K$q$L for the fundamental cut-sets can be written as
\begin{equation}\label{kQL}
    \begin{pmatrix}
      E_{n-1}, A \\
    \end{pmatrix}
    \begin{pmatrix}
      q_\cT \\
      q_\cL \\
    \end{pmatrix}
    =0
\end{equation}
where $E_{n-1}$ is the $(n-1) \times (n-1)$ identity matrix and $A \in \R^{(n-1) \times (b-n+1)}$
is a \emph{topological matrix} with elements $\{ -1,0,1 \}$.
The branch fluxes are obtained as
\begin{equation}
\label{branch v}
\begin{pmatrix}
  \varphi_\cT \\
  \varphi_\cL \\
\end{pmatrix}
=
\begin{pmatrix}
  E_{n-1} \\
  A^\top \\
\end{pmatrix}
\varphi_\cT
\end{equation}
hence
\begin{equation}\label{phiL}
    \varphi_\cL=A^\top \varphi_\cT.
\end{equation}
Moreover, K$\varphi$L for the fundamental loops are given as
\begin{equation}\label{kvL}
    \begin{pmatrix}
      -A^\top & E_{b-n+1} \\
    \end{pmatrix}
    \begin{pmatrix}
      \varphi_\cT \\
      \varphi_\cL \\
    \end{pmatrix}
    =0
\end{equation}
and for branch charges we have
\begin{equation}
\label{branch q}
\begin{pmatrix}
  q_\cT \\
  q_\cL \\
\end{pmatrix}
=
\begin{pmatrix}
  -A \\
  E_{b-n+1} \\
\end{pmatrix}
q_\cL.
\end{equation}
Hence
\begin{equation}\label{qT}
    q_\cT=-A q_\cL.
\end{equation}

Now, suppose that Assumption\ \ref{assu:case 2 RLCM 1} is satisfied.
Then, it is possible to partition vectors $\varphi_\cT$ and $q_\cT$ as follows
\begin{equation}\label{v q T partition}
    \varphi_\cT=
    \begin{pmatrix}
      \varphi_C \\
      \varphi_{R \cT} \\
      \varphi_e \\
    \end{pmatrix}; \hskip0.8cm
    q_\cT =\begin{pmatrix}
      q \\
      q_{R \cT} \\
      q_e \\
    \end{pmatrix}
\end{equation}
where $\varphi_C$ are the capacitor fluxes, while $\varphi_{R \cT}$ and $\varphi_e$ denote the fluxes of resistors and independent flux-sources belonging to $\cT$, respectively. Moreover,
$q$ are the charges of the $\cB_{MC}$ and $\cB_C$ elements, while $q_{R \cT}$ and $q_e$ denote the charges of resistors and independent charge sources belonging to $\cT$, respectively.

Similarly, partition vectors $\varphi_\cL$ and $q_\cL$ as follows
\begin{equation}\label{v q L partition}
    \varphi_\cL=
    \begin{pmatrix}
      \varphi \\
      \varphi_{R \cL} \\
      \varphi_a \\
    \end{pmatrix}; \hskip0.8cm
    q_\cL =\begin{pmatrix}
      q_L \\
      q_{R \cL} \\
      q_a \\
    \end{pmatrix}
\end{equation}
where $\varphi$ are the fluxes of the $\cB_{LM}$ and $\cB_L$ elements, while $\varphi_{R \cL}$ and $\varphi_a$ denote the fluxes of resistors and independent flux sources belonging to $\cL$, respectively. Moreover,
$q_L$ are the inductor charges, while $q_{R \cL}$ and $q_a$ denote the charges of resistors and independent charge sources belonging to $\cL$, respectively.

We can also partition $A$ accordingly as
\begin{equation}\label{P partition}
A=    \begin{pmatrix}
      A_{CL} & A_{CR} & A_{Ca} \\
      A_{RL} & A_{RR} & A_{Ra} \\
      A_{eL} & A_{eR} & A_{ea} \\
    \end{pmatrix}
\end{equation}
with each sub-matrix having suitable dimension.

\begin{prop}
\label{prop:P under assumpion 4}
Suppose that Assumptions\ \ref{assu:topo assu RLCM}, \ref{assu:case 2 RLCM 1}
are satisfied. Then, $A$ has the following form
\begin{equation}\label{P partition reduced}
A=
    \begin{pmatrix}
      A_{CL} & A_{CR} & A_{Ca} \\
      A_{RL} & 0 & 0 \\
      A_{eL} & 0 & 0 \\
    \end{pmatrix}.
\end{equation}
\end{prop}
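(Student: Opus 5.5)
The claim is that the four blocks $A_{RR},A_{Ra},A_{eR},A_{ea}$ vanish. These are the blocks whose rows belong to tree resistors or tree flux sources and whose columns belong to co-tree resistors or co-tree charge sources. I would read $A$ through its fundamental-loop meaning and then apply condition 3) of Assumption~\ref{assu:case 2 RLCM 1}.

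First I recall what the entries of $A$ mean. By (\ref{kvL}), $\varphi_\cL=A^\top\varphi_\cT$. The $j$-th column of $A$ (the $j$-th row of $A^\top$) therefore lists, with signs $\pm1$, exactly the tree branches in the fundamental loop of the $j$-th co-tree branch. This fundamental loop is the unique loop formed by that link together with tree branches. Equivalently, by (\ref{kQL}), the $i$-th row of $A$ lists the links in the fundamental cut-set of the $i$-th tree branch. These are the same incidence data, since a tree branch $i$ lies in the fundamental loop of link $j$ if and only if link $j$ lies in the fundamental cut-set of tree branch $i$. Hence $A_{ij}\neq 0$ exactly when tree branch $i$ belongs to the fundamental loop of link $j$.

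Next, take any link $j$ in the columns of the blocks $A_{\cdot R}$ or $A_{\cdot a}$, i.e.\ a resistor or independent charge source in $\cL$. Condition 3) says it forms a loop with elements $\cB_{CM}$ and/or $\cB_C$ only. All those elements lie in $\cT$ by condition 1), so this loop consists of link $j$ plus tree branches.

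The only delicate point is to check that this loop is the fundamental loop of $j$. It is, because the path in a tree between two nodes is unique: the tree path joining the endpoints of $j$ is forced to use only capacitor-type branches. Consequently, column $j$ of $A$ has nonzero entries only in rows indexed by $\varphi_C$, that is, by $\cB_{CM}$ and $\cB_C$. Its entries in rows indexed by tree resistors $\varphi_{R\cT}$ and tree flux sources $\varphi_e$ are zero.

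This gives $A_{RR}=A_{Ra}=0$ and $A_{eR}=A_{ea}=0$. No restriction is placed on the first block column, corresponding to inductor links, or on the first block row. This yields (\ref{P partition reduced}). Assumption~\ref{assu:topo assu RLCM} is used only to guarantee that the partitions (\ref{v q T partition})–(\ref{v q L partition}) are well defined.
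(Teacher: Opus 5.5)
Your proof is correct and follows essentially the same route as the paper: both read the blocks off the fundamental-loop relation $\varphi_\cL=A^\top\varphi_\cT$ and use condition 3) of Assumption~\ref{assu:case 2 RLCM 1} to show that the fluxes of co-tree resistors and charge sources depend only on $\varphi_C$. Your uniqueness-of-tree-path argument makes explicit a step the paper leaves implicit, namely that the loop in condition 3) is the fundamental loop (equivalently, that the expression in terms of the independent tree fluxes is unique), which forces the coefficients on $\varphi_{R\cT}$ and $\varphi_e$ to vanish.
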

\emph{Proof.}
Let us write $\varphi_{R \cL}$ and $\varphi_a$ using~(\ref{phiL}), with $A$ as in (\ref{P partition}):
\begin{align*}
\varphi_{R \cL} &= A^\top_{CR} \varphi_C + A^\top_{RR} \varphi_{R \cT} + A^\top_{eR} \varphi_e \\
\varphi_a &=A^\top_{Ca} \varphi_C + A^\top_{Ra} \varphi_{R \cT} + A^\top_{ea} \varphi_e.
\end{align*}
Due to point 3) of Assumption~\ref{assu:case 2 RLCM 1}, $\varphi_{R \cL}$ and $\varphi_a$ can be expressed using solely elements of $\varphi_C$.
As a consequence, submatrices $A_{RR}$, $A_{Ra}$, $A_{eR}$ and $A_{ea}$ are null.
\qed

Write K$q$L~(\ref{kQL}) for the quantities $q$ and K$\varphi$L~(\ref{kvL}) for the quantities $\varphi$
\begin{align}
\label{eq:qC phiL}
\begin{split}
q&=-A_{CL} q_{L} - A_{CR} q_{R \cL} - A_{Ca} q_a \\
\varphi& =A_{CL}^\top \varphi_{C} + A_{RL}^\top \varphi_{R_\cT}+ A_{eL}^\top \varphi_e.
\end{split}
\end{align}
Using the CRs of resistors, we have $q_{R \cL}= (R_{\cL})^{-1} \varphi_{R \cL}$ and $\varphi_{R_\cT} = R_{\cT} q_{R \cT}$, where $R_{\cL}$ (resp., $R_{\cT}$) is a diagonal matrix whose entries are the resistances of the resistors belonging to $\cL$ (resp., $\cT$).
Using~(\ref{phiL}) and~(\ref{qT}) and recalling the form of $A$ due to Proposition~\ref{prop:P under assumpion 4}, we obtain $q_{R \cT}=-A_{RL}q_{L}$ and $ \varphi_{R \cL}=A_{CR}^\top \varphi_{C}$.
Then, by straightforward substitutions, we can rewrite~(\ref{eq:qC phiL}) as
\begin{align*}
q & = -A_{CL} q_{L} - A_{CR} (R_{\cL})^{-1} A_{CR}^\top \varphi_{C}  - A_{Ca} q_a \\
\varphi & =A_{CL}^\top \varphi_{C} - A_{RL}^\top R_{\cT} A_{RL} q_{L}+ A_{eL}^\top \varphi_e.
\end{align*}

Summing up, in matrix-vector notation we have
\begin{equation}\label{H repr}
    \begin{pmatrix}
       q \\
       \varphi \\
    \end{pmatrix}
    =
    -H \begin{pmatrix}
       \varphi_C \\
       q_L\\
    \end{pmatrix}
    -B \begin{pmatrix}
     \varphi_e \\
     q_a \\
    \end{pmatrix}
\end{equation}
where $H \in \R^{(n_C+n_L) \times (n_C+n_L)}$, $B \in \R^{(n_C+n_L) \times (n_e+n_a)}$ are constant matrices
given by
\begin{equation}\label{H cas2}
    H=
      \begin{pmatrix}
        A_{CR} (R_{\cL})^{-1} A_{CR}^\top & A_{CL} \\
        -A_{CL}^\top & A_{RL}^\top R_{\cT} A_{RL} \\
      \end{pmatrix}
\end{equation}
and
\begin{equation}\label{B case2}
    B=\begin{pmatrix}
        A_{Ca} & 0 \\
        0 & -A_{eL}^\top \\
      \end{pmatrix}.
\end{equation}

\subsection{SEs in the FCD and VCD}

Using (\ref{H repr}), together with the CRs (\ref{CR CM C}) and
(\ref{CR LM L}), we obtain the SEs describing the behavior of $\mathfrak{N}$ in the FCD
\begin{align}\label{SEs FCD RLCM sour}
\begin{split}
    \begin{pmatrix}
      C \dot \varphi_C \\
      L \dot q_L \\
    \end{pmatrix}
=&
-H
\begin{pmatrix}
       \varphi_C \\
       q_L \\
    \end{pmatrix}\\
 &
        -\begin{pmatrix}
       \hat Q_{MC}( \varphi_{CM}+\Phi_{M0})-
       \hat Q_{MC}(\Phi_{M0})\\
       0 \\
       \hat \Phi_{ML}( q_{LM}+Q_{M0})-
       \hat \Phi_{ML}(Q_{M0})\\
       0 \\
     \end{pmatrix} \\
     &-B \begin{pmatrix}
     \varphi_e \\
     q_a \\
    \end{pmatrix}
     +
     \begin{pmatrix}
      C v_{C0} \\
      L i_{L0} \\
    \end{pmatrix}
\end{split}
\end{align}
where $H$ and $B$ are given in (\ref{H repr}) and (\ref{B case2}), respectively. Note that the
vector field defining the SEs depends upon the initial conditions
\begin{equation}\label{w0}
    w_0=(v_{C0}, i_{L0}, \Phi_{M0},Q_{M0}) \in \R^{n_C+n_L+n_\Phi+n_Q}
\end{equation}
for the state variables in the VCD.

Henceforth, we suppose that the sources in the VCD vanish for $t \ge 0$, i.e., we have $a_i(t)=0$ and $e_i(t)=0$, when $t \ge 0$. Actually, the sources $a_i(t)$ and $e_i(t)$ for $t \le 0$ are used to set initial conditions at $t=0$ for capacitors ($v_{C0i}$), inductors ($i_{L0i}$) and memristors ($\Phi_{M0i}, Q_{M0i}$) but thereafter the circuit evolves without sources. Therefore, also the sources in the FCD vanish, i.e., $q_{ai}(t)=0$ and $\varphi_{ei}(t)=0$ for $t \ge 0$ and the SEs (\ref{SEs FCD RLCM sour}) boil down to
\begin{align}\label{SEs FCD RLCM}
\begin{split}
    \begin{pmatrix}
      C \dot \varphi_C \\
      L \dot q_L \\
    \end{pmatrix}
=& -H
\begin{pmatrix}
       \varphi_C \\
       q_L \\
    \end{pmatrix}\\
&
        -\begin{pmatrix}
       \hat Q_{MC}( \varphi_{CM}+\Phi_{M0})-
       \hat Q_{MC}(\Phi_{M0})\\
       0 \\
       \hat \Phi_{ML}( q_{LM}+Q_{M0})-
       \hat \Phi_{ML}(Q_{M0})\\
       0 \\
     \end{pmatrix}\\
&
     +
     \begin{pmatrix}
      C v_{C0} \\
      L i_{L0} \\
    \end{pmatrix}.
\end{split}
\end{align}

Differentiating (\ref{SEs FCD RLCM}), we can also obtain the SEs of $\mathfrak{N}$ in the VCD,
i.e.,
\begin{align}\label{SEs VCD RLCM}
\begin{split}
    \begin{pmatrix}
      C \dot v_C \\
      L \dot i_L \\
    \end{pmatrix}
=&
-H
\begin{pmatrix}
       v_C \\
       i_L \\
    \end{pmatrix}
        -\begin{pmatrix}
       \hat Q'_{MC}( \Phi_{M})v_{CM}\\
       0 \\
       \hat \Phi'_{ML}( Q_{M})i_{LM}\\
       0 \\
     \end{pmatrix}
     \\
     \dot \Phi_{M}=&v_{CM}\\
     \dot Q_{M}=&i_{LM}.
\end{split}
\end{align}

\begin{property}
\label{prop:rel-sol}
Suppose that Assumptions\ \ref{assu:topo assu RLCM}, \ref{assu:case 2 RLCM 1}
are satisfied. The following statements hold:

1) Let $(v_C, i_L, \Phi_M, Q_M)$ be the solution of the SEs in the VCD~(\ref{SEs VCD RLCM}) with initial conditions $w_0$. Then, $(\varphi_C=\int_0^t v_C(\sigma) d\sigma$, $q_L=\int_0^t i_L(\sigma) d\sigma)$ is the solution of the SEs in the FCD~(\ref{SEs FCD RLCM}) with initial
conditions $\varphi_C(0)=0$, $q_L(0)=0$.

2) Conversely, let $(\varphi_C, q_L)$ be the solution of the SEs in the FCD~(\ref{SEs FCD RLCM}) with initial conditions $\varphi_C(0)=0$, $q_L(0)=0$. Then, $(v_C=\dot \varphi_C, i_L=\dot q_L, \Phi_M=\varphi_{CM}+\Phi_{M0}, Q_M=q_{LM}+Q_{M0})$ is the solution of the SEs in the
VCD~(\ref{SEs VCD RLCM}) with initial conditions $w_0$.
\end{property}

\emph{Proof.}
\begin{enumerate}
\item Let $(v_C, i_L, \Phi_M, q_M)$ be a solution of~(\ref{SEs VCD RLCM}) with initial conditions $w_0$. Integration of the last two equations on $[0,t]$ provides
\begin{align*}
\Phi_M=&\int_0^t v_{CM}(\sigma) d \sigma + \Phi_{M0}\\
Q_M=&\int_0^t i_{LM}(\sigma) d \sigma + Q_{M0}.
\end{align*}
Consequently, we can rewrite the first two equations of~(\ref{SEs VCD RLCM}) as
\begin{align*}
   \begin{pmatrix}
      C \dot v_C \\
      L \dot i_L \\
    \end{pmatrix}
=&
-H
\begin{pmatrix}
       v_C \\
       i_L \\
    \end{pmatrix} \\
&        -\begin{pmatrix}
       \hat Q'_{MC}(\int_0^t v_{CM}(\sigma) d \sigma + \Phi_{M0})v_{CM}\\
       0 \\
       \hat \Phi'_{ML}(\int_0^t i_{LM}(\sigma) d \sigma + \Phi_{M0})i_{LM}\\
       0 \\
     \end{pmatrix}.
\end{align*}
Integration on $[0,t]$ leads to
\begin{align*}
    \begin{pmatrix}
      C \dot \varphi_C \\
      L \dot q_L \\
    \end{pmatrix}
- &      \begin{pmatrix}
      C v_{C0} \\
      L i_{L0} \\
    \end{pmatrix} =
-H
\begin{pmatrix}
       \varphi_C \\
       q_L \\
    \end{pmatrix}\\
- & \begin{pmatrix}
       \hat Q_{MC}( \varphi_{CM}+\Phi_{M0})-
       \hat Q_{MC}(\Phi_{M0})\\
       0 \\
       \hat \Phi_{ML}( q_{LM}+Q_{M0})-
       \hat \Phi_{ML}(Q_{M0})\\
       0 \\
     \end{pmatrix}
\end{align*}
where we have taken into account that
\begin{align*}
\int_0^t & \hat Q'_{MC}\left (\int_0^\sigma v_{CM}(\tau) d \tau + \Phi_{M0}\right) v_{CM}(\sigma) d\sigma \\
=& \hat Q_{MC}( \varphi_{CM}+\Phi_{M0})- \hat Q_{MC}(\Phi_{M0})
\end{align*}
and
\begin{align*}
\int_0^t & \hat \Phi'_{ML}\left (\int_0^\sigma i_{LM}(\tau) d \tau + Q_{M0}\right) i_{LM}(\sigma) d\sigma \\
=& \hat \Phi_{ML}( q_{LM}+Q_{M0})- \hat \Phi_{ML}(Q_{M0}).
\end{align*}
Then, $(\varphi_C(t),q_L(t))=(\int_0^t v_C(\sigma) d\sigma, \int_0^t i_L(\sigma) d\sigma)$ is the solution of~(\ref{SEs FCD RLCM}) with initial conditions $\varphi_C(0)=0$, $q_L(0)=0$.
\item Let $(\varphi_C, q_L)$ be the solution of the SE in the FCD~(\ref{SEs FCD RLCM}) with initial conditions $\varphi_C(0)=0$, $q_L(0)=0$. Differentiating both sides of~(\ref{SEs FCD RLCM})
we obtain
\begin{align*}
    \begin{pmatrix}
      C \dot v_C \\
      L \dot i_L \\
    \end{pmatrix}
=& \!
- \! H
\begin{pmatrix}
       \dot \varphi_C \\
       \dot q_L \\
    \end{pmatrix}
       \! - \! \begin{pmatrix}
       \hat Q'_{MC}(\varphi_{CM}+ \Phi_{M0})\dot \varphi_{CM}\\
       0 \\
       \hat \Phi'_{ML}( q_{LM}+ Q_{M0})\dot q_{LM}\\
       0
     \end{pmatrix}.
\end{align*}
\end{enumerate}
Now, let $\Phi_M=\varphi_{CM} + \Phi_{M0}$ and $Q_M= q_{LM}+ Q_{M0}$. By noting that $\dot \Phi_M= \dot \varphi_{CM} = v_{CM}$,
$\dot Q_M=\dot q_{LM}=i_{LM}$, $\dot \varphi_C=v_C$ and $\dot q_L=i_L$ we can rewrite the above equations as
\begin{align*}
    \begin{pmatrix}
      C \dot v_C \\
      L \dot i_L \\
    \end{pmatrix}
=&
-H
\begin{pmatrix}
       v_C \\
       i_L \\
    \end{pmatrix}
        -\begin{pmatrix}
       \hat Q'_{MC}(\Phi_{M}) v_{CM}\\
       0 \\
       \hat \Phi'_{ML}(Q_{M}) i_{LM}\\
       0
     \end{pmatrix}.
\end{align*}
This shows that $(\dot \varphi_C , \dot q_L, \varphi_{CM}+ \Phi_{M0}, q_{LM}+Q_{M0})$ is the solution of~(\ref{SEs VCD RLCM}) with initial conditions $w_0$. \qed


\subsection{A Fundamental Reciprocity Property}

It is instructive to interpret the previous results by invoking the fundamental
concept in circuit theory of \emph{reciprocity}.
As it is typically done in circuit theory, let us
decompose the memristor circuit $\mathfrak{N}$ as shown in Fig.\ \ref{fig:multiport}, i.e., a resistive $(n_C+n_L)$-port $N$ containing $n_R$ resistors, to which $n_C+n_L$ one-port elements are connected. In particular, we have $n_{\Phi}$ elements $\cB_{CM}$, $n_C-n_{\Phi}$ elements $\cB_C$, $n_Q$ elements $\cB_{LM}$ and $n_L-n_Q$ elements $\cB_L$ (cf.\ Sect.\ \ref{sect:complete_variables}).

\begin{figure}[t]
  \centering
\includegraphics[width=0.9\linewidth]{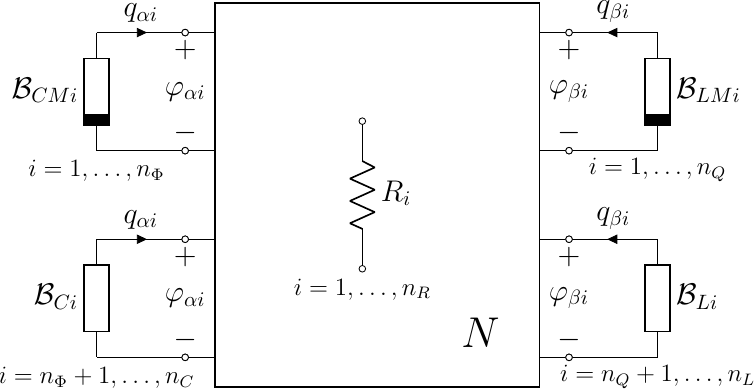}
\caption{\small Decomposition of a circuit $\mathfrak{N}$ in the class RLCM.}
\label{fig:multiport}
\end{figure}

Denote by $ \varphi_\alpha =(\varphi_{\alpha i})_{i=1,\dots,n_C}\in \R^{n_C}$ (resp., $q_\alpha =(q_{\alpha i})_{i=1,\dots,n_C} \in \R^{n_C}$) the vector of  port fluxes (resp., charges) of the capacitor ports of $N$, see Fig.\ \ref{fig:multiport}.
Moreover, denote by $ q_\beta =(q_{\beta i})_{i=1,\dots,n_L} \in \R^{n_L}$ (resp., $ \varphi_\beta =(\varphi_{\beta i})_{i=1,\dots,n_L} \in \R^{n_L}$) the vector of  port charges (resp., fluxes) of the inductor ports of $N$, see Fig.\ \ref{fig:multiport}.

\begin{prop}
\label{prop:H case2}
Suppose that Assumptions\ \ref{assu:topo assu RLCM}, \ref{assu:case 2 RLCM 1}
are satisfied. Then, there exists the hybrid-representation of the $(n_C+n_L)$-port $N$
\begin{equation}\label{H represent}
    \begin{pmatrix}
       q_\alpha \\
       \varphi_\beta \\
    \end{pmatrix}
    =
    H \begin{pmatrix}
       \varphi_\alpha \\
       q_\beta \\
    \end{pmatrix} +
    B \begin{pmatrix}
     \varphi_e \\
     q_a \\
    \end{pmatrix}
\end{equation}
where
$$
H= \begin{pmatrix}
        H_{\alpha \alpha} & H_{\alpha \beta} \\
        H_{\beta \alpha} & H_{\beta \beta} \\
      \end{pmatrix}
$$
is such that
$$
H_{\alpha \alpha}=H_{\alpha \alpha}^\top=A_{CR} (R_{\cL})^{-1} A_{CR}^\top
$$
and
$$
H_{\beta \beta}=H_{\beta \beta}^\top=A_{RL}^\top R_{\cT} A_{RL}
$$
are symmetric matrices, while
$$
H_{\alpha \beta}=A_{CL}
$$
and
$$
H_{\beta \alpha}=-H_{\alpha \beta}^\top=-A_{CL}^\top.
$$
Moreover, $B$ is given in (\ref{B case2}).
\end{prop}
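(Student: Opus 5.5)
The plan is to obtain the hybrid representation of $N$ by repeating, at the level of the ports of $N$, the Kirchhoff-law computation that produced (\ref{H repr}). Then we read off the block structure of $H$ from (\ref{H cas2}).

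First we fix port conventions. Each capacitor-type one-port ($\cB_{CM}$ or $\cB_C$) is attached to port $\alpha_i$ of $N$, and each inductor-type one-port ($\cB_{LM}$ or $\cB_L$) to port $\beta_i$. With associated reference directions on the elements and the usual port directions for $N$, the interconnection gives $\varphi_\alpha=\varphi_C$, $q_\alpha=-q$, $q_\beta=q_L$ and $\varphi_\beta=-\varphi$. The sign flips hold because the port charge entering $N$ is the negative of the element charge, and similarly for the fluxes at the inductor ports.

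Next we replace the one-ports by their ``driving'' sources. The capacitor ports are driven by independent flux sources $\varphi_\alpha$ and the inductor ports by independent charge sources $q_\beta$. This is legitimate because, by Assumption \ref{assu:case 2 RLCM 1}, the capacitor ports lie in the tree $\cT$ and the inductor ports in the co-tree $\cL$. Hence the same tree/co-tree splitting, and therefore the same matrix $A$, applies to $N$ with these sources. By Proposition \ref{prop:P under assumpion 4}, $A$ has the reduced form (\ref{P partition reduced}).

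We then repeat the substitutions that led to (\ref{eq:qC phiL}) and (\ref{H repr}): K$q$L for the fundamental cut-sets of the capacitor branches, K$\varphi$L for the fundamental loops of the inductor branches, together with $q_{R\cL}=R_\cL^{-1}A_{CR}^\top\varphi_C$ and $\varphi_{R\cT}=-R_\cT A_{RL}q_L$. These steps use only Kirchhoff laws and the resistor CRs, not the CRs of the reactive or memristive elements, so they hold verbatim for $N$. Combining them with the sign identifications above converts (\ref{H repr}) into (\ref{H represent}), with the same $H$ of (\ref{H cas2}) and the same $B$ of (\ref{B case2}). The existence of the hybrid representation then follows because all port quantities are expressed explicitly in terms of the chosen independent port variables $(\varphi_\alpha,q_\beta)$ and the sources.

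The block properties are then immediate from (\ref{H cas2}):
\begin{itemize}
\item $H_{\alpha\alpha}=A_{CR}R_\cL^{-1}A_{CR}^\top$ and $H_{\beta\beta}=A_{RL}^\top R_\cT A_{RL}$ are symmetric, since $R_\cL$ and $R_\cT$ are diagonal.
\item $H_{\alpha\beta}=A_{CL}$ and $H_{\beta\alpha}=-A_{CL}^\top$ are negative transposes of each other.
\end{itemize}

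The main obstacle is bookkeeping of reference directions. The sign convention linking port variables to element variables must be chosen so that the minus sign in (\ref{H repr}) disappears and the skew relation $H_{\beta\alpha}=-H_{\alpha\beta}^\top$ appears exactly as stated. I would check this by verifying a single capacitor-inductor loop by hand.
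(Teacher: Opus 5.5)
Your proposal is correct and follows the paper's own argument. The paper's proof is exactly this: read off (\ref{H represent}) from (\ref{H repr})--(\ref{B case2}) using $q_\alpha=-q$ and $\varphi_\beta=-\varphi$, with $\varphi_\alpha=\varphi_C$ and $q_\beta=q_L$ left implicit. You additionally spell out two points the paper leaves unstated: why the Kirchhoff-law derivation carries over verbatim to the port network $N$, and why $H_{\alpha\alpha}$ and $H_{\beta\beta}$ are symmetric (because $R_\cL$ and $R_\cT$ are diagonal).
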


\emph{Proof.} The result follows from (\ref{H repr})-(\ref{B case2}), considering that
$q_\alpha=-q$ and $\varphi_\beta=-\varphi$.
\qed

\begin{remark}
It is noticed that the particular form of $H$ in Proposition\ \ref{prop:H case2}, where $H_{\alpha \alpha}$ and  $H_{\beta \beta}$ are symmetric matrices, while we have $H_{\beta \alpha}=-H_{\alpha \beta}^\top$, is a consequence of the reciprocity of the multi-port $N$ (cf.\ \cite[Sect.\ 4.1]{chua1980}). Indeed, it is known that a multi-port network containing only two-terminal elements is necessarily reciprocal \cite[Th.\ 7]{chua1980}.
\end{remark}

\section{Brayton-Moser mixed potential}
\label{sect:P for RLCM}

In the seminal articles \cite{brayton1964theoryI,brayton1964theoryII}, Brayton and Moser proposed a theory of nonlinear RLC circuits (without memristors) based on the concept of a \emph{mixed potential}. In this section, we show that we can extend the mixed-potential to the class RLCM of circuits containing also memristors in addition to RLC components, provided we consider their description via FCAM in the FCD domain.

Consider a circuit $\mathfrak{N}$ in the class RLCM.
As seen in Sect.\ \ref{sect:equiv}, there is an analogy between $\mathfrak{N}$ in the FCD and an RLC circuit in the VCD.
Via this analogy, the dynamical behavior of $\mathfrak{N}$ can be analyzed through the mixed potential introduced by Brayton and Moser \cite{brayton1964theoryI} for nonlinear RLC circuits.

The following result holds.

\begin{theorem}
\label{th:mixed_potential}
Suppose that Assumptions\ \ref{assu:topo assu RLCM}, \ref{assu:case 2 RLCM 1}
are satisfied. Introduce for $\mathfrak{N}$ the mixed potential $\cP(\varphi_C,q_L;w_0):
\R^{(n_C+n_L)} \to \R$ defined as
\begin{align}\label{mixed n1}
\begin{split}
    \cP & (\varphi_C,q_L; w_0)=
    \frac{1}{2}  \varphi_C^\top H_{\alpha \alpha}  \varphi_C
    -\frac{1}{2}  q_L^\top H_{\beta \beta}  q_L
    +  \varphi_C^\top H_{\alpha \beta}  q_L\\
    &+\sum_{i=1}^{n_{\Phi}} \int_0^{ \varphi_{\alpha i}}(\hat Q_{Mi}(\rho+\Phi_{M0i})-
       \hat Q_{Mi}(\Phi_{M0i}))d\rho \\
    &-\sum_{j=1}^{n_{Q}} \int_0^{ q_{\beta j}}
(\hat \Phi_{Mj}(\rho+Q_{M0j})-
       \hat \Phi_{Mj}(Q_{M0j}))d\rho \\
    &- \varphi_C^\top C v_{C0}
+  q_L^\top L i_{L0}
\end{split}
\end{align}
which depends upon the initial conditions $w_0$ for the state variables in the
VCD (cf.\ (\ref{w0})).
Then, the SEs
(\ref{SEs FCD RLCM}) of $\mathfrak{N}$ in the FCD can be written in the form
\begin{equation}\label{SEs FCD RLCM mixed}
     \begin{pmatrix}
      C \dot \varphi_C \\
      L \dot q_L \\
    \end{pmatrix}
=    \begin{pmatrix}
      -\frac{\partial}{\partial  \varphi_C} \cP( \varphi_C, q_L;w_0) \\
      \frac{\partial}{\partial  q_L} \cP( \varphi_C, q_L;w_0) \\
    \end{pmatrix}.
\end{equation}
\end{theorem}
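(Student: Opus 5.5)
The plan is a direct computation: take the gradient of $\cP$ term by term and compare it with the right-hand side of (\ref{SEs FCD RLCM}). I will use the block structure of $H$ from Proposition~\ref{prop:H case2}: $H_{\alpha\alpha}$ and $H_{\beta\beta}$ are symmetric, and $H_{\beta\alpha}=-H_{\alpha\beta}^\top$. In the integral terms of (\ref{mixed n1}) I will read the upper limits $\varphi_{\alpha i}$ and $q_{\beta j}$ as the capacitor fluxes $\varphi_{Ci}$, $i\le n_\Phi$, and the inductor charges $q_{Lj}$, $j\le n_Q$. These are the components of $\varphi_{CM}$ and $q_{LM}$, i.e. the port variables of the corresponding composite elements $\cB_{CM}$ and $\cB_{LM}$.

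First I compute $\partial\cP/\partial\varphi_C$. By symmetry, the quadratic term gives $H_{\alpha\alpha}\varphi_C$. The bilinear term gives $H_{\alpha\beta}q_L$. The $q_L$-quadratic term and the $Q$-memristor integrals do not depend on $\varphi_C$ and contribute nothing. By the fundamental theorem of calculus, each flux-memristor integral gives $\hat Q_{Mi}(\varphi_{Ci}+\Phi_{M0i})-\hat Q_{Mi}(\Phi_{M0i})$ in component $i\le n_\Phi$ and zero in the remaining components. The linear term gives $-Cv_{C0}$. Hence $-\partial\cP/\partial\varphi_C$ equals
\[
-H_{\alpha\alpha}\varphi_C-H_{\alpha\beta}q_L-\begin{pmatrix}\hat Q_{MC}(\varphi_{CM}+\Phi_{M0})-\hat Q_{MC}(\Phi_{M0})\\ 0\end{pmatrix}+Cv_{C0}.
\]
This is exactly the first block row of (\ref{SEs FCD RLCM}), since by (\ref{H cas2}) the first block row of $H$ is $(H_{\alpha\alpha},H_{\alpha\beta})$.

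Next I compute $\partial\cP/\partial q_L$. The quadratic term gives $-H_{\beta\beta}q_L$. The bilinear term gives $H_{\alpha\beta}^\top\varphi_C=-H_{\beta\alpha}\varphi_C$. The charge-memristor integrals give $-(\hat\Phi_{ML}(q_{LM}+Q_{M0})-\hat\Phi_{ML}(Q_{M0}))$, padded with zeros. The linear term gives $Li_{L0}$. Taken with a plus sign, this is $-H_{\beta\alpha}\varphi_C-H_{\beta\beta}q_L$ minus the memristor term plus $Li_{L0}$. It coincides with the second block row of (\ref{SEs FCD RLCM}), because the second block row of $H$ in (\ref{H cas2}) is $(-A_{CL}^\top, A_{RL}^\top R_\cT A_{RL})=(H_{\beta\alpha},H_{\beta\beta})$.

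There is no real obstacle here; the only delicate point is sign bookkeeping. The sign flip between the two rows of (\ref{SEs FCD RLCM mixed}) is what makes a single scalar function possible. It relies on the skew relation $H_{\beta\alpha}=-H_{\alpha\beta}^\top$, which is the reciprocity property, and on the symmetry of the diagonal blocks, so that the quadratic terms differentiate to $H_{\alpha\alpha}\varphi_C$ and $H_{\beta\beta}q_L$ rather than to symmetrized versions. I will state explicitly that $C^1$ regularity of $\hat Q_{Mi}$ and $\hat\Phi_{Mj}$ makes $\cP$ continuously differentiable, so that the fundamental theorem of calculus applies.
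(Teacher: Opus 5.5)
Your proof is correct. Differentiating term by term, using the symmetry of $H_{\alpha\alpha}$ and $H_{\beta\beta}$ and the relation $H_{\beta\alpha}=-H_{\alpha\beta}^\top$, reproduces both block rows of (\ref{SEs FCD RLCM}). Since $\cP$ is given explicitly in the statement, this verification is all that (\ref{SEs FCD RLCM mixed}) logically requires.

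The paper ends with the same step: its last sentence says the gradient form follows by direct computation of the partial derivatives. Most of its proof, however, does something you skip. It derives formula (\ref{mixed n1}) instead of taking it as given. It views the capacitors and inductors as attached to a multiport $N'$ and invokes Brayton--Moser additivity to write $\cP=\cP_0+\cP_M+\cP_N$:
\begin{itemize}
\item $\cP_0$ is the potential of the initial-condition sources $Cv_{C0}$ and $Li_{L0}$;
\item $\cP_M$ collects the integrals of the memristor characteristics;
\item $\cP_N$ is the line integral of the hybrid representation of $N$. It reduces to the quadratic and bilinear terms because $H_{\alpha\alpha}$ and $H_{\beta\beta}$ are symmetric.
\end{itemize}
That construction is what justifies calling $\cP$ the Brayton--Moser mixed potential and ties its existence to the reciprocity of $N$. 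Your route is shorter and self-contained but explains nothing about where the formula comes from; for the theorem as stated, it is enough.
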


\begin{figure}[t]
  \centering
\includegraphics[width=1.0\linewidth]{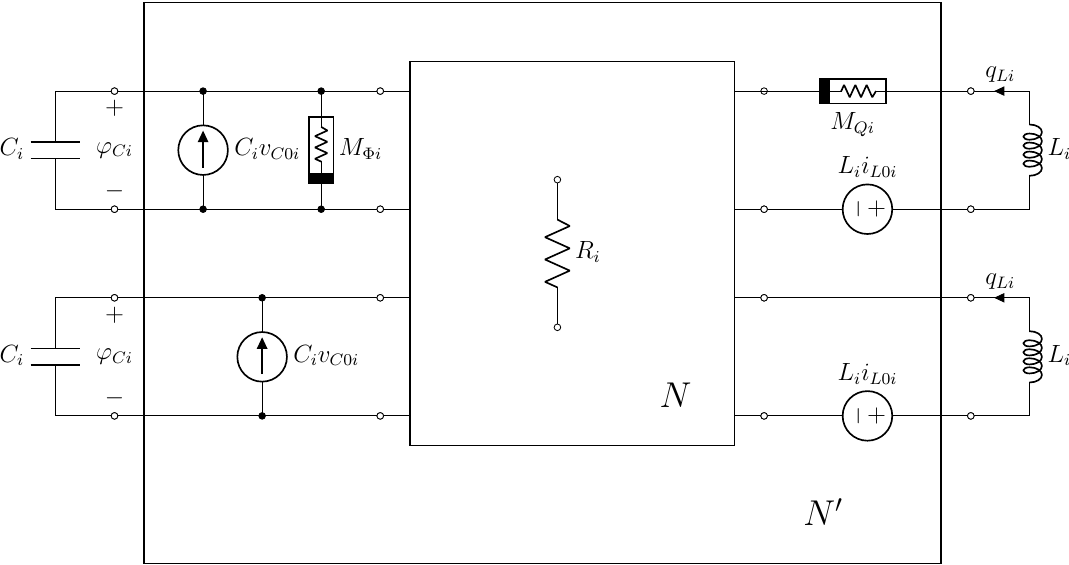}
\caption{\small Decomposition of a circuit $\mathfrak{N}$ in the class RLCM used in the proof of
Theorem\ \ref{th:mixed_potential}.}
\label{fig:N1}
\end{figure}

\emph{Proof.}
The mixed potential is a purely dc concept related to the resistive multi-port network $N'$ to which capacitors and inductors are connected (see Fig.\ \ref{fig:N1}). Next, we show that,
under Assumption\ \ref{assu:case 2 RLCM 1}, a mixed potential of $N'$ can be explicitly found via the topological matrix describing the K$q$Ls at the fundamental cut-sets (cf.\ (\ref{kQL})) and the CRs of circuit elements.
The multi-port $N'$
is constituted by the charge and flux sources related to capacitor and inductor initial conditions, respectively, the memristor nonlinearities and the resistive multi-port $N$.
Due to an additivity property \cite{brayton1964theoryI}, the mixed potential of $N'$ is given by
\begin{align*}
\begin{split}
\cP( \varphi_C, q_L;w_0)
=&\cP_0( \varphi_C, q_L; w_0)+\cP_M( \varphi_C, q_L; w_0)\\
 &+\cP_N( \varphi_C, q_L,w_0)
\end{split}
\end{align*}
where the first term is the mixed potential of the charge sources $Cv_{C0}$ and flux sources $Li_{L0}$, the second term is that of the memristors and the third is the mixed potential of the multi-port $N$.

We have
$$
\cP_0( \varphi_C, q_L;w_0)=- \varphi_C^\top C v_{C0}
+  q_L^\top L i_{L0}.
$$
The mixed potential of flux-controlled memristors in parallel to capacitors is
$$
\cP_{\Phi}( \varphi_C;w_0)= \sum_{i=1}^{n_\Phi} \int_0^{ \varphi_{C i}}(\hat Q_{Mi}(\rho+\Phi_{M0i})-
       \hat Q_{Mi}(\Phi_{M0i}))d\rho
$$
while that of the charge-controlled memristors in series with inductors is
$$
\cP_{Q}( q_L;w_0) \! = \! \! - \! \! \sum_{i=1}^{n_{Q}} \int_0^{ q_{L i}}
(\hat \Phi_{Mi}(\varrho+Q_{M0i}) \! - \!
      \hat \Phi_{Mi}(Q_{M0i}))d\varrho.
$$
Therefore, $\cP_M( \varphi_C, q_L)=\cP_{\Phi}( \varphi_C)+\cP_{Q}( q_L)$.
The mixed potential of $N$, given in terms of its $H$-representation, can be expressed as
\begin{align*}
\begin{split}
\cP_N( \varphi_C, q_L;w_0)=&
    \int_0^{\varphi_C} H_{\alpha \alpha} \sigma \cdot d \sigma \\
    &-\int_0^{q_L} H_{\beta \beta} \varsigma \cdot d \varsigma
    +\varphi_C^\top H_{\alpha \beta}  q_L
\end{split}
\end{align*}
where the dot denotes the scalar product.
On the other hand, considering that $H_{\alpha \alpha}$ is symmetric, we have
\begin{align*}
\int_0^{\varphi_C} H_{\alpha \alpha} \sigma \cdot d \sigma
=& \int_0^{\varphi_C} \frac{\partial}{\partial \sigma}
(\frac{1}{2} \sigma^\top H_{\alpha \alpha} \sigma) \cdot d\sigma \\
=&\frac{1}{2}\varphi_C^ \top H_{\alpha \alpha} \varphi_C.
\end{align*}
Similarly, since $H_{\beta \beta}$ is symmetric
\begin{align*}
\int_0^{q_L} H_{\beta \beta} \varsigma \cdot d \varsigma
=\frac{1}{2}q_L^\top H_{\beta \beta} q_L
\end{align*}
and then
\begin{align*}
\cP_N( \varphi_C, q_L;w_0)\! \! = \! \!
    \frac{1}{2}\varphi_C^ \top H_{\alpha \alpha} \varphi_C
    \! - \! \frac{1}{2}q_L^\top H_{\beta \beta} q_L
    \!+\! \varphi_C^\top H_{\alpha \beta}  q_L.
\end{align*}

Summing up these terms, we obtain the expression of the mixed potential
given in (\ref{mixed n1}).

The second part of the theorem follows by verifying (\ref{SEs FCD RLCM mixed})
via direct computation of the partial derivatives of the mixed potential.
\qed

\section{Two Special Classes of RLCM Networks}
\label{sect:special}
Next, we study in more detail two special classes of memristor circuits in the class RLCM.
In the first one, we consider the maximum possible number of memristors (Sect.\ \ref{sect:memr}),
while in the second we assume there there is only one type of reactive elements
(Sect.\ \ref{sect:RCM}).

\subsection{Maximum Number of Memristors}
\label{sect:memr}
Consider the special yet relevant case where each capacitor has in parallel a flux-controlled
memristor and each inductor has in series a charge-controlled memristor.
Now, we have $n_{\Phi}=n_C$, $n_{Q}=n_L$, while $\Phi_{MC}=\varphi_{CM}+\Phi_{MC0}$
and $Q_{ML}=q_{LM}+Q_{ML0}$.

Consider functions $\cQ:\R^{2(n_C+n_L)} \to \R^{n_C}$ and $\Psi: \R^{2(n_C+n_L)} \to \R^{n_L}$
defined as follows
\begin{align}
\begin{split}
\label{Q(t) case2}
    \cQ =&\cQ(v_C,i_L,\Phi_{MC},Q_{ML})=Cv_C+H_{\alpha \alpha} \Phi_{MC}\\
    &+H_{\alpha \beta} Q_{ML}
    +\hat Q_{MC}(\Phi_{MC})
\end{split}
\end{align}
and
\begin{align}
\begin{split}
\label{Phi(t) case2}
    \Psi =&\Psi(v_C,i_L,\Phi_{MC},Q_{ML})
    =Li_L-H_{\alpha \beta}^\top \Phi_{MC}\\
    &+H_{\beta \beta} Q_{ML}
    +\hat \Phi_{ML}(Q_{ML}).
\end{split}
\end{align}
Then, the SEs (\ref{SEs FCD RLCM}) in the FCD can be put into the form
\begin{align}\label{SEs FCD RLCM special}
\begin{split}
       \begin{pmatrix}
      C \dot \Phi_{MC}  \\
      L \dot Q_{ML} \\
    \end{pmatrix}
=&
-H
\begin{pmatrix}
       \Phi_{MC} \\
       Q_{ML} \\
    \end{pmatrix}
        -\begin{pmatrix}
       \hat Q_{MC}( \Phi_{MC} )\\
       \hat \Phi_{ML}( Q_{ML})\\
     \end{pmatrix}
    +
\begin{pmatrix}
       Q_0\\
       \Psi_0 \\
    \end{pmatrix}
\end{split}
\end{align}
where $Q_0=\cQ(0)$ is given by
\begin{align}
\begin{split}
\label{Q0 case2}
Q_0 \! = \! Cv_{C0}+H_{\alpha \alpha} \Phi_{MC0} \! + \! H_{\alpha \beta} Q_{ML0}
    \! + \! \hat Q_{MC}(\Phi_{MC0})
\end{split}
\end{align}
and $\Psi_0 = \Psi(0)$ amounts to
\begin{align}
\begin{split}
\label{Ph0i case2}
\Psi_0 \! = Li_{L0}-H_{\alpha \beta}^\top \Phi_{MC0}+H_{\beta \beta} Q_{ML0}
    +\hat \Phi_{ML}(Q_{ML0}).
\end{split}
\end{align}
These are constant terms depending upon the initial conditions for the state variables in the VCD.

The SEs in the VCD become
\begin{align}\label{SEs VCD RLCM special}
\begin{split}
    \frac{d}{dt}
    \begin{pmatrix}
      C v_C \\
      L i_L \\
    \end{pmatrix}
=&
-H
\begin{pmatrix}
       v_C \\
       i_L \\
    \end{pmatrix}
        -\begin{pmatrix}
       \hat Q'_{MC}(\Phi_{MC})v_C \\
       \hat \Phi'_{ML}(Q_{ML})i_L \\
     \end{pmatrix}
     \\
     \frac{d}{dt} \Phi_{MC}=&v_C\\
     \frac{d}{dt} Q_{ML}=&i_L.
\end{split}
\end{align}

It is immediate to check that we have
$$
\frac{d}{dt}\cQ=0; \ \ \frac{d}{dt}\Psi=0
$$
along the solutions of the SEs (\ref{SEs VCD RLCM special}) in the VCD.
This means that these SEs admit $n_C+n_L$ \emph{invariants of motion}
(or conserved quantities) given by the components of functions $\cQ$ and $\Psi$.
As a consequence, the dynamics evolves on invariant manifolds
\begin{align}
\begin{split}
\cM(Q_0,\Psi_0)=& \{ (v_C,q_L,\Phi_{MC},Q_{ML}) \in \R^{2(n_C+n_L)}:\\
 &\cQ=Q_0, \Psi=\Psi_0 \}
\end{split}
\end{align}
where $(Q_0,\Psi_0) \in \R^{n_C+n_L}$ is termed \emph{manifold index}.

Let us now consider the modified mixed potential $\tilde P(\Phi_M,Q_M; Q_0,\Psi_0): \R^{2(n_C+n_L)}\to \R$ given by
\begin{equation}\label{tildP}
\begin{split}
  \tilde \cP&(\Phi_M,Q_M; \Psi_0,Q_0)=\frac{1}{2}\Phi_M^\top H_{\alpha \alpha} \Phi_M - \frac{1}{2}Q_M^\top H_{\beta \beta} Q_M \\
  &+ \Phi_M^\top H_{\alpha \beta} Q_M + \sum_{i=1}^{n_C} \int_0^{\varphi_{Mi}} \hat{Q}_{Mi}(\sigma) d \sigma\\
  &- \sum_{j=1}^{n_L} \int_0^{q_{Mj}} \hat{\Phi}_{Mj}(\sigma) d \sigma
  + (\Phi_M^\top \, Q_M^\top) \begin{pmatrix}
 - Q_0 \\
  \Psi_0
  \end{pmatrix}.
  \end{split}
\end{equation}
Clearly, $\tilde \cP$ is manifold dependent and it is function of the manifold index $(Q_0,\Psi_0)$.
It can be verified that the SEs (\ref{SEs FCD RLCM special}) in the FCD can be written
as follows
\begin{align}\label{SEs FCD RLCM special_mixed}
\begin{split}
      \begin{pmatrix}
      C \dot \Phi_{MC}  \\
      L \dot Q_{ML} \\
    \end{pmatrix}
=&
\begin{pmatrix}
       -\frac{\partial}{\partial \Phi_M} \tilde \cP (\Phi_M,Q_M; Q_0,\Psi_0) \\
       \frac{\partial}{\partial Q_M} \tilde \cP(\Phi_M,Q_M; Q_0,\Psi_0)
    \end{pmatrix}.
    \end{split}
\end{align}

By means of $\tilde \cP$ we can also give a compact expression to the invariants of motion.
We have
\begin{equation}\label{cQmixed}
    \cQ = Cv_C+ \frac{\partial}{\partial \Phi_M} \tilde \cP(\Phi_M,Q_L;Q_0,\Psi_0)+Q_0
\end{equation}
while
\begin{equation}\label{cPsimixed}
    \Psi = Li_L - \frac{\partial}{\partial Q_M} \tilde \cP(\Phi_M,Q_L;Q_0,\Psi_0)+\Psi_0.
\end{equation}

\subsection{RCM and RLM Networks}
\label{sect:RCM}

Consider the special case where a circuit in $\mathfrak{N}$
has no inductors, i.e., $n_L=0$ (class RCM).
Under Assumptions\ \ref{assu:topo assu RLCM}, \ref{assu:case 2 RLCM 1}, the tree $\cT$ is made by all the elements $\cB_{CM}$ and $\cB_C$, while all the resistors are on the co-tree $\cL$. In this case, the SEs
in the FCD boil down to
\begin{align}
\begin{split}
\label{SEs FCD RCM}
      C \dot \varphi_C
=&-G \varphi_C
        -\begin{pmatrix}
       \hat Q_{MC}( \varphi_{CM}+\Phi_{M0})-
       \hat Q_{MC}(\Phi_{M0})\\
       0 \\
     \end{pmatrix}\\
     &+ C v_{C0}
\end{split}
\end{align}
where $G$ is a symmetric matrix of conductances given by
$$
G=A_{CR} (R_{\cL})^{-1} A_{CR}^\top.
$$
Moreover, on the basis of (\ref{mixed n1}), the mixed potential $\cP_\mathrm{RCM}(v_C;w_0):\R^{n_C} \to \R$ becomes
\begin{align}\label{mixed RC}
\begin{split}
    \cP_\mathrm{RCM}(v_C;w_0)=& \frac{1}{2}  \varphi_C^\top G  \varphi_C
    +\sum_{i=1}^{n_{\Phi}} \int_0^{ \varphi_{\alpha i}}(\hat Q_{Mi}(\rho+\Phi_{M0i})\\
    &  -\hat Q_{Mi}(\Phi_{M0i}))d\rho - \varphi_C^\top C v_{C0}
\end{split}
\end{align}
where $w_0=(v_{C0},\Phi_{M0})$.

Due to Theorem\ \ref{th:mixed_potential}, the SEs
(\ref{SEs FCD RLCM}) of $\mathfrak{N}$ in the FCD can be written in the
\emph{gradient form}
\begin{equation}\label{SEs FCD RCM mixed}
      C \dot \varphi_C =
      -\frac{\partial}{\partial \varphi_C} \cP_\mathrm{RCM}( \varphi_C;w_0).
\end{equation}
This implies that the mixed potential is actually a Lyapunov function that
decreases along the solutions of the SEs (\ref{SEs FCD RLCM}). An analogous
treatment holds when the memristor circuit contains inductors but
there are no capacitors (class RLM).

\section{Discussion}
\label{sect:disc}
Here, we collect some remarks where we discuss the significance of the previous results.

\begin{remark}
The mixed potential (\ref{mixed n1}) obtained in Theorem\ \ref{th:mixed_potential} generalizes to
RLCM circuits the mixed potential originally obtained by
Brayton and Moser \cite{brayton1964theoryI,brayton1964theoryII} for memristor-less
RLC circuits. The extension has been possible by analyzing memristor circuits in
the FCD, rather than in the VCD, as it was done in the quoted papers by Brayton and Moser.
The use of FCAM and the equivalence principle obtained in Sect.\ \ref{sect:equiv} between
an RLCM circuit in the FCD and an RLC circuit in the VCD are the key elements enabling
the extension to circuits containing memristors.
\end{remark}

\begin{remark}
The mixed potential (\ref{mixed n1}) depends upon the initial conditions
for the state variables $w_0$ in the VCD (cf.\ (\ref{w0})). It is worth to note that, for special classes
of circuits, as the class considered in Sect.\ \ref{sect:memr}, it is directly related to the
invariants of motion and the invariant manifold index $(Q_0,\Psi_0)$ (cf.\ (\ref{SEs FCD RLCM special_mixed}), (\ref{cQmixed}) and (\ref{cPsimixed})).
\end{remark}

\begin{remark}
The expression of the mixed potential (\ref{mixed n1})
will play a crucial role for the analysis of convergence of circuits in the class RLCM,
a topic which we address in a systematic way in a companion paper \cite{DiMarco2026BraytonAppl}.
\end{remark}

\begin{remark}
In Sect.\ \ref{sect:RCM}, we have seen that in the special case of class RCM (or RLM), a memristor circuit
in the FCD is described by a gradient system with respect to the mixed
potential, i.e., the mixed potential is actually a Lyapunov function.
This result is analogous to that obtained in \cite[Th.\ 1]{di2025robust}.
However, while the Lyapunov function used in that paper has been been found via an
abstract mathematical procedure, here we have shown that the Lyapunov
function has a clear circuit theoretic meaning related to the mixed
potential of the memristor circuit.

\end{remark}

\section{Illustrative Examples}
\label{sect:examples}
In this section, we provide a number of examples to show how we can explicitly find
the mixed potential for memristor circuits in the class RLCM. We start with some
basic RLCM circuits (Sect.\ \ref{sect:basic}) and then we consider large RLCM
circuits with a neural network structure (Sect.\ \ref{sect:neur}). All the
circuits are represented in the FCD in order to use the theoretic results in the
paper to write the SEs and find a mixed potential.

\subsection{Basic RLCM Circuits}
\label{sect:basic}
\subsubsection{Simple RCM Circuit}

\begin{figure}[t]
  \centering
\includegraphics[width=0.7\linewidth]{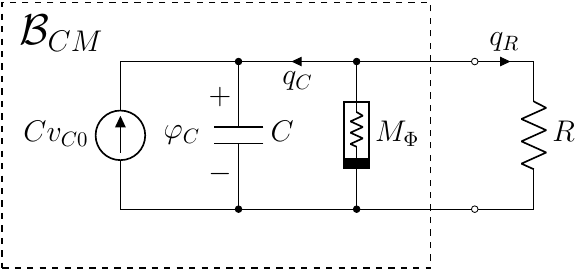}
\caption{\small Simplest memristor circuit in the class RCM.}
\label{fig:MC}
\end{figure}

Consider the simplest circuit in the class RCM (cf.\ Sect.\ \ref{sect:RCM}) given by a resistor,
a capacitor and a flux-controlled memristor in parallel with each other
(Fig.\ \ref{fig:MC}). Clearly, Assumption\ \ref{assu:case 2 RLCM 1} is satisfied.
The SE describing the circuit can be obtained by using
K$q$L, i.e., $q_C+q_M+q_R=0$, and K$\varphi$L, i.e., $\varphi_C=\varphi_M$,
together with the CRs of the capacitor, memristor
and resistor (cf.\ Sect.\ \ref{sect:FCD}). We obtain the first-order SE
$$
\dot \varphi_C=-\frac{\varphi_C}{R}+\hat Q_M(\varphi_C+\Phi_{M0})-
\hat Q_M(\Phi_{M0}) +Cv_{C0}.
$$
In this case we simply have $G=1/R$ (cf.\ (\ref{SEs FCD RCM})). According to
(\ref{mixed RC}), the mixed potential is given by
\begin{align*}
\begin{split}
\cP(\varphi_C; v_{C0},\Phi_{M0})=&\frac{1}{2R}\varphi_C^2
+\int_0^{\varphi_C} \hat Q_M(\rho+\Phi_{M0})d\rho\\
 &-\hat Q_M(\Phi_{M0})\varphi_C-Cv_{C0} \varphi_C.
\end{split}
\end{align*}
It can be checked that the circuit obeys the gradient system
$$
C \dot \varphi_C = -\frac{\partial}{\partial \varphi_C} \cP
(\varphi_C; v_{C0},\Phi_{M0}).
$$

\begin{figure}[t]
  \centering
\includegraphics[width=0.87\linewidth]{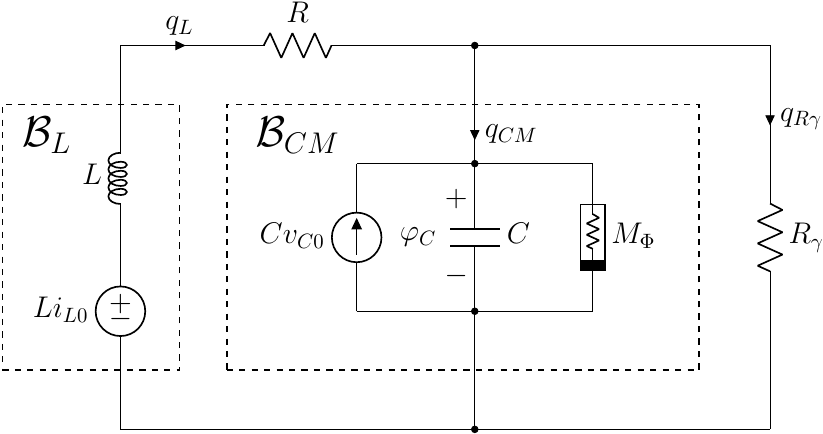}
\caption{\small Murali-Lakshmanan-Chua circuit with a memristor.}
\label{fig:mural}
\end{figure}

\subsubsection{Simple RLCM Circuit} Consider the RLCM circuit with a capacitor,
an inductor, two resistors and a flux-controlled memristor, that is depicted in Fig.\ \ref{fig:mural}.
This is a
modified version of Murali-Lakshmanan-Chua circuit~\cite{murali1994simplest} where the nonlinear resistor is replaced by a flux-controlled memristor~\cite{ishaq2013nonsmooth}. It is a core element in several neuromorphic applications, such as embedding of neuron models~\cite{Innocenti2022735}, reservoir computing~\cite{escudero2026MLCreservoire} and analogic implementation of logic gates and memory latches~\cite{ashokkumar2021MLClogic}. It can be checked
by inspection that $\varphi_C$
and $q_L$ are a complete set of variables (cf.\ Assumption\ \ref{def:complete_set}).
In fact, the charge in $R$ can be obtained from $q_L$, since these
elements are in series. Moreover, the flux on $R_\gamma$ can be obtained via $\varphi_C$, since these elements are in parallel.
The SEs can be obtained by writing K$q$L, i.e.,
$q_{CM}-q_L+q_{R\gamma}=0$,
and K$\varphi$L, i.e., $\varphi_L+\varphi_R+\varphi_C=0$ and using the CRs of circuit elements. We obtain that
the circuit is described by the second-order SEs
\begin{equation}\label{SEsmural}
   \left\{
      \begin{array}{ll}
        C\dot \varphi_C = -\frac{\varphi_C}{R_\gamma}+q_L-\hat Q_M(\varphi_C+\Phi_{M0})\\
        + \hat Q_M(\Phi_{M0})+Cv_{C0}\\
        L\dot q_L =-\varphi_C -Rq_L +Li_{L0}.
      \end{array}
    \right.
\end{equation}
Comparing with (\ref{SEs FCD RLCM}), we have
$$
H=\begin{pmatrix}
        H_{\alpha \alpha} & H_{\alpha \beta} \\
        H_{\beta \alpha} & H_{\beta \beta} \\
      \end{pmatrix}
 =\begin{pmatrix}
    \frac{1}{R_\gamma} & -1 \\
    1 & R \\
  \end{pmatrix}
$$
and the mixed potential (\ref{mixed n1}) takes the form
\begin{align*}
\begin{split}
\cP(\varphi_C,q_L;w_0)=&\frac{1}{2R_\gamma}\varphi_C^2
-\frac{1}{2}Rq_L^2-\varphi_C q_L\\
 &+\int_0^{\varphi_C} \hat Q_M(\rho+\Phi_{M0})d\rho\\
 &-\hat Q_M(\Phi_{M0})\varphi_C\\
 &-Cv_{C0} \varphi_C +Li_{L0}q_L
\end{split}
\end{align*}
where we let $w_0=(v_{C0},i_{L0},\Phi_{M0})$.
It can be verified that the SEs (\ref{SEsmural}) satisfy
\begin{equation}\label{SEsmuralmixed}
   \left\{
      \begin{array}{ll}
        C\dot \varphi_C = - \frac{\partial}{\partial \varphi_C}\cP(\varphi_C,q_L;w_0)\\
        L\dot q_L = \frac{\partial}{\partial q_L}\cP(\varphi_C,q_L;w_0).
      \end{array}
    \right.
\end{equation}

\begin{figure}[t]
\begin{center}
\begin{subfigure}{1.0\columnwidth}
  \includegraphics[width=\linewidth]{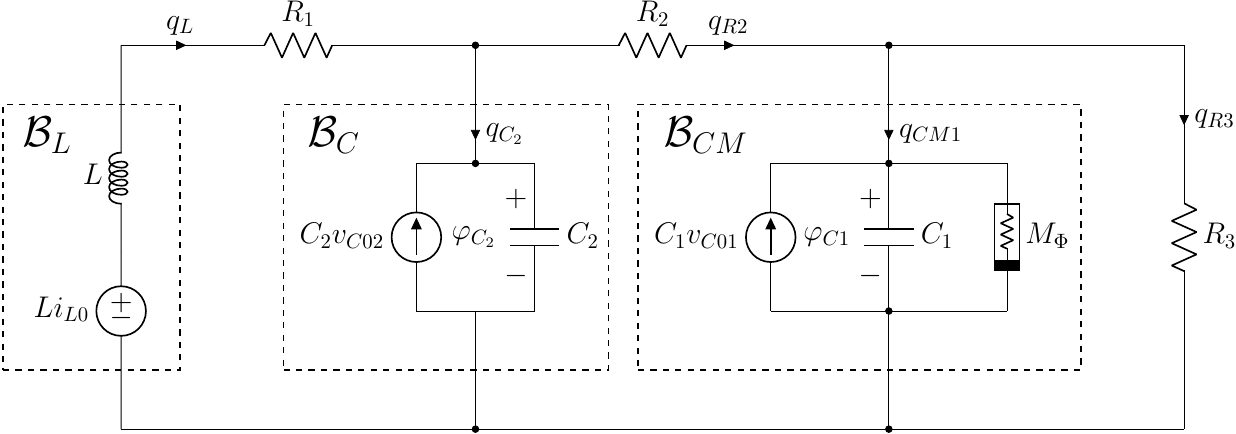}
  \caption{\small \small }
\end{subfigure}\\
\begin{subfigure}{0.4\columnwidth}
  \includegraphics[width=\linewidth]{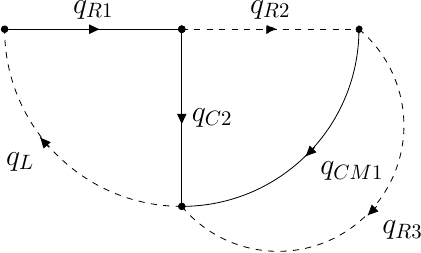}
  \caption{\small \small }
\end{subfigure}
\caption{\small (a) Modified Chua's circuit with a memristor. (b)
Associated digraph where the tree branches are indicated with solid lines and
co-tree branches with dashed lines.}
\label{fig:chua}
\end{center}
\end{figure}

\subsubsection{Memristor Chua's Circuit}
\label{sect:Chua}

Consider the RLCM circuit in Fig.\ \ref{fig:chua}(a) with two capacitors, one inductor, three
resistors and one flux-controlled memristor. This circuit, which is a modified version of the classic
Chua's circuit where the nonlinear resistor in that circuit is replaced by a flux-controlled
memristor, has been widely studied in the literature \cite{Itoh2008,rocha2017memristive,ahamed2020sliding}.
Figure\ \ref{fig:chua}(b) highlights the main elements of the circuit defined in Sect.\ \ref{sect:complete_variables}.
A digraph associated with the circuit is shown in Fig.\ \ref{fig:chua}(b). Choose a tree $\cT=\{\cB_{CM},\cB_C,R_1\}$
and a co-tree $\cL=\{\cB_{L},R_2,R_3 \}$. It is easy to see that, with these choices,
Assumption\ \ref{assu:case 2 RLCM 1} is satisfied, hence $\varphi_{C1}$, $\varphi_{C2}$
and $q_L$ are a complete set of variables for the circuit. Writing the K$q$Ls as in~(\ref{qT}), we obtain
the topological matrix $A$ and its decomposition according to~(\ref{P partition reduced})
\begin{align*}
A&= \left (\begin{array}{c|c}
A_{CL} &A_{CR} \\
\hline
A_{RL} &0
\end{array} \right) =
\left (\begin{array}{c|cc}
0&-1 &1 \\
-1&1 &0 \\
\hline
-1&0&0
\end{array} \right)
\end{align*}
while $R_\cT$ and $R_\cL$ take the form
\begin{equation*}
R_\cT=\begin{pmatrix}
R_1
\end{pmatrix} \ \ \ R_\cL=
\begin{pmatrix}
R_2 & 0 \\
0 & R_3
\end{pmatrix}.
\end{equation*}
On this basis, we obtain
$$
H=\left( \begin{array}{c|c}
        H_{\alpha \alpha} & H_{\alpha \beta} \\
        \hline
        H_{\beta \alpha} & H_{\beta \beta}
      \end{array}
      \right)
 =\left(\begin{array}{cc|c}
    \frac{1}{R_2}+\frac{1}{R_3} & -\frac{1}{R_2} & 0 \\
    -\frac{1}{R_2} & \frac{1}{R_2} & -1 \\
    \hline
    0 & 1 & R_1
  \end{array}\right).
$$
Hence, according to (\ref{SEs FCD RLCM}) the SEs describing the circuit
in the FCD are
\begin{equation}\label{SEsChua}
   \left\{
      \begin{array}{ll}
        C_1\dot \varphi_{C1} = -\varphi_{C1}(\frac{1}{R_2}+\frac{1}{R_3})+\frac{\varphi_{C2}}{R_2}\\
        -\hat Q_M(\varphi_{C1}+\Phi_{M0})+\hat Q_M(\Phi_{M0})+C_1 v_{C10}\\
C_2\dot \varphi_{C2} = \frac{\varphi_{C1}}{R_2} -\frac{\varphi_{C2}}{R_2}+q_L+C_2v_{C20}\\
        L\dot q_L =-\varphi_{C2} -R_1q_L +Li_{L0}
      \end{array}
    \right.
\end{equation}
and, from (\ref{mixed n1}), the mixed potential is given by
\begin{align*}
\begin{split}
&\cP(\varphi_{C1},\varphi_{C2},q_L;w_0)\\
=&
\frac{1}{2}\left(\frac{1}{R_2}+\frac{1}{R_3}\right)\varphi_{C1}^2+\frac{1}{2R_2}\varphi_{C2}^2
-\frac{\varphi_{C1}\varphi_{C2}}{R_2}\\
 &-\frac{1}{2}R_1q_L^2-\varphi_{C2} q_L\\
 &+\int_0^{\varphi_{C1}} \hat Q_M(\rho+\Phi_{M0})d\rho
 -\hat Q_M(\Phi_{M0})\varphi_{C1}\\
 &-C_1v_{C01} \varphi_{C1}-C_2v_{C02} \varphi_{C2} +Li_{L0}q_L
\end{split}
\end{align*}
where $w_0=(v_{C01},v_{C02},i_{L0},\Phi_{M0})$.
It can be verified that the SEs (\ref{SEsChua}) satisfy
\begin{equation}\label{SEsChuamixed}
   \left\{
      \begin{array}{ll}
        C_1\dot \varphi_{C1} = - \frac{\partial}{\partial \varphi_{C1}}
\cP(\varphi_{C1},\varphi_{C2},q_L;w_0)\\
        C_2\dot \varphi_{C2} = - \frac{\partial}{\partial \varphi_{C2}}
\cP(\varphi_{C1},\varphi_{C2},q_L;w_0)\\
        L\dot q_L = \frac{\partial}{\partial q_L}\cP(\varphi_{C1},\varphi_{C2},q_L;w_0).
      \end{array}
    \right.
\end{equation}

\begin{figure}[t]
  \centering
\includegraphics[width=1.0\linewidth]{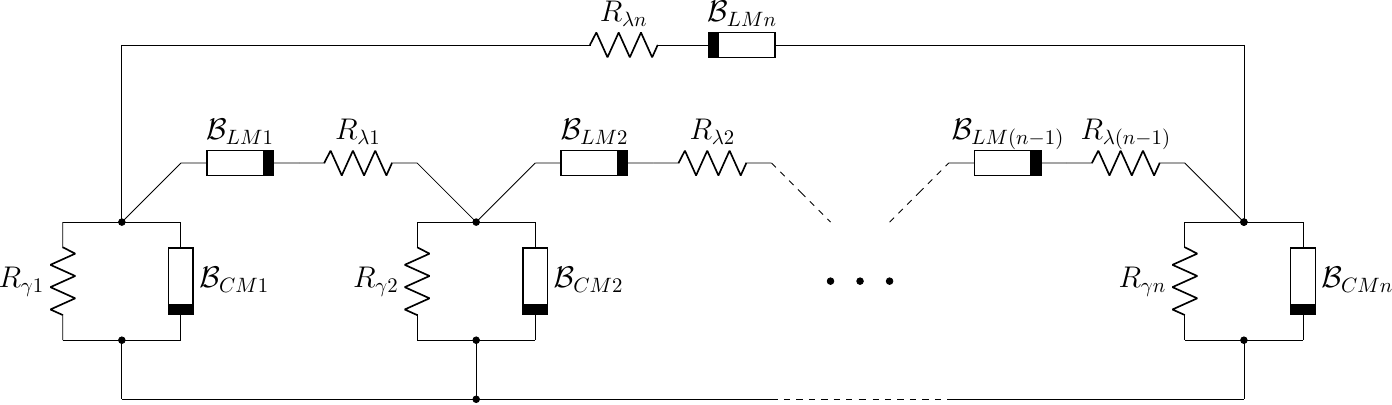}
\caption{\small 1D circular array of $n$ cells, where each cell has a $\cB_{CM}$ element
in parallel to a resistor. The interconnections between cells are given by a
$\cB_{LM}$ element in series with a resistor.}
\label{fig:arrayMC}
\end{figure}

\subsection{RLCM Circuits with a Neural Architecture}
\label{sect:neur}
\subsubsection{Array of Inductively-coupled RCM Cells}

Consider the 1D circular array with $n$ cells shown in Fig.\ \ref{fig:arrayMC}. The array has a neural-like
architecture where each cell, which is constituted by an element $\cB_{CMi}$ in parallel with a resistor $R_{\gamma i}$, $i=1,\dots,n$, is connected to the nearest-neighbor cells
via $R_{\lambda (i-1)}$ in series with $\cB_{LM(i-1)}$ and $R_{\lambda i}$ in series with $\cB_{LMi}$.
Indexes are considered modulo $n$, i.e., $\cB_{CM(n+1)}=\cB_{CM1}$
and $\cB_{CM0}=\cB_{CMn}$. It is easy to check, either by inspection, or by verifying
Assumptions\ \ref{assu:case 2 RLCM 1}, that $\varphi_{Ci}$ and $q_{Li}$, $i=1,\dots,n$,
are a complete set of variables. Application of K$q$L and K$\varphi$L yields
$$
   \left\{
      \begin{array}{ll}
        q_{CMi} = q_{L(i-1)}-q_{Li}-q_{R \gamma i}\\
        \varphi_{LMi} = \varphi_{Ci}-\varphi_{C(i+1)}-\varphi_{R \lambda i}
      \end{array}
    \right.
$$
for $i=1,\dots,n$.
By using the CRs of circuit elements (Sect.\ \ref{sect:FCD}), we obtain the SEs in the FCD
\begin{equation}\label{SEsarray}
   \left\{
      \begin{array}{ll}
        C_i \dot \varphi_{Ci} = q_{L(i-1)}-q_{Li}-\frac{\varphi_{Ci}}{R_{\gamma i}}\\
        -\hat Q_{Mi}(\varphi_{Ci}+\Phi_{M0i})+\hat Q_{Mi}(\Phi_{M0i})+C_i v_{C0i}  \\
        L_i \dot q_{Li} = \varphi_{Ci}-\varphi_{C(i+1)}-R_{\lambda i} q_{L_i}\\
        -\hat \Phi_{Mi}(q_{Li}+Q_{M0i})+\hat \Phi_{Mi}(Q_{M0i})+L_i i_{L0i}
      \end{array}
    \right.
\end{equation}
for $i=1,\dots,n$, where $n=n_C=n_L$.

Comparing with (\ref{SEs FCD RLCM}),
we obtain
$$
H  \!= \! \begin{pmatrix}
H_{\alpha \alpha} & H_{\alpha \beta}\\
\! - \! H_{\alpha \beta}^\top & H_{\beta \beta}
\end{pmatrix}
$$
where
$$
H_{\alpha \alpha}\! = \! \mathrm{diag}\left( \frac{1}{R_{\gamma 1}}, \dots,\frac{1}{R_{\gamma n}} \right); \
H_{\beta \beta} \! = \! \mathrm{diag}(R_{\lambda 1}, \dots,R_{\lambda n})
$$
and $H_{\alpha \beta}$ is the circulant matrix
\begin{align*}
\begin{split}
H_{\alpha \beta} =& \begin{pmatrix}
1 & 0 &\cdots & \cdots & 0 &-1 \\
-1&1&0 & \cdots& \cdots &0 \\
0 & -1 & 1 & 0 & \cdots & 0 \\
\vdots & \ddots & \ddots & \ddots &\ddots &\vdots\\
0&\cdots &0 &-1 & 1&0 \\
0& \cdots & \cdots & 0& -1 & 1
\end{pmatrix}\\
=& \mathrm{circ}(1, 0, \dots, 0, -1).
\end{split}
\end{align*}
Then, on the basis of (\ref{mixed n1}), we have
\begin{align*}
\begin{split}
&\cP(\varphi_{Ci},q_{Li},i=1,\dots,n;w_0)\\
=&
\frac{1}{2} \sum_{i=1}^n \frac{\varphi_{Ci}^2}{R_i}
-\frac{1}{2} \sum_{i=1}^n q_{Li}^2R_i-\sum_{i=1}^n q_{Li}(\varphi_{C(i+1)}-\varphi_{Ci})\\
 &+\sum_{i=1}^n\int_0^{\varphi_{Ci}} \hat Q_{Mi}(\rho+\Phi_{M0i})d\rho\\
 &-\sum_{i=1}^n\int_0^{\varphi_{Li}} \hat \Phi_{Mi}(\rho+Q_{M0i})d\rho\\
 &-\sum_{i=1}^n \hat Q_{Mi}(\Phi_{M0i})\varphi_{Ci}
 +\sum_{i=1}^n \hat \Phi_{Mi}(Q_{M0i})q_{Li}\\
 &-\sum_{i=1}^n C_iv_{C0i} \varphi_{Ci}+\sum_{i=1}^n L_ii_{L0i}q_{Li}
\end{split}
\end{align*}
where we let $w_0=(v_{C0i},i_{L0i},\Phi_{M0i},Q_{M0i})_{i=1,\dots,n}$.
It can be verified that the SEs can be written as follows
\begin{equation}\label{SEsarraymixed}
   \left\{
      \begin{array}{ll}
        C_i \dot \varphi_{Ci} = -\frac{\partial}{\partial \varphi_{Ci}}
\cP(\varphi_{Ci},q_{Li},i=1,\dots,n;w_0) \\
        L_i \dot q_{Li} = \frac{\partial}{\partial q_{Li}}
\cP(\varphi_{Ci},q_{Li},i=1,\dots,n;w_0)
      \end{array}
    \right.
\end{equation}
for $i=1,\dots,n$.

\begin{figure}[t]
  \centering
\includegraphics[width=0.8\linewidth]{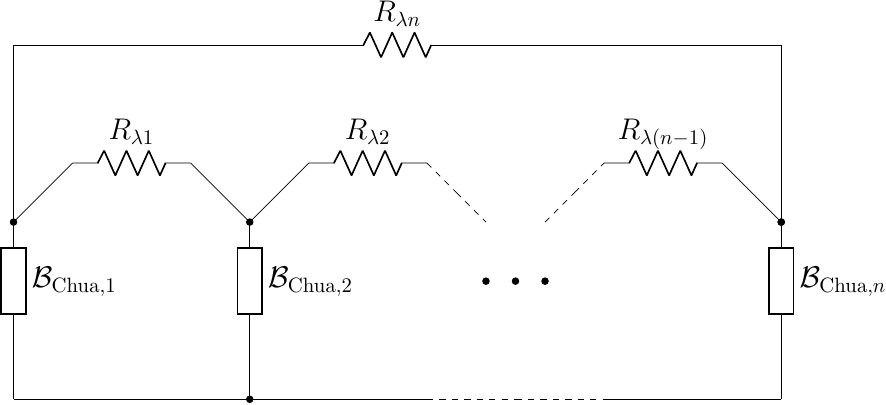}
\caption{\small 1D circular array of $n$ Chua's circuits with a memristor.
The interconnections between Chua's circuits are purely resistive.}
\label{fig:arrayChua}
\end{figure}

\subsubsection{Array of Resistively-coupled Memristor Chua's Circuits}

Figure\ \ref{fig:arrayChua} shows a circular 1D array of $n$
cells, where each cell is a Chua's circuit and the interconnections with
neighboring cells are resistors. This is a variant of the cellular nonlinear
network architecture proposed in \cite{Chual95} where, as in Section\ \ref{sect:Chua},
the nonlinear resistor in Chua's circuit is replaced by a flux-controlled
memristor. Each cell is described by elements $\cB_{CM1i}$, $\cB_{C2i}$, $\cB_{Li}$, $R_{1i}$, $R_{2i}$ and $R_{3i}$, and it is
connected to the nearest-neighbor cells via $R_{\lambda(i-1)}$ (connecting $\cB_{CM1(i-1)}$ to $\cB_{CM1i}$)  and $R_{\lambda i}$
(connecting $\cB_{CM1i}$ to $\cB_{CM1(i+1)}$). Indexes are considered modulo $n$, i.e., $\cB_{CM(n+1)}=\cB_{CM1}$
and $\cB_{CM0}=\cB_{CMn}$.
By verifying Assumptions\ \ref{assu:topo assu RLCM}, \ref{assu:case 2 RLCM 1}, it can be to checked that $\varphi_{C1i}$, $\varphi_{C2i}$ and $q_{Li}$, $i=1,\dots,n$, are a complete set of variables. Application of K$q$L and K$\varphi$L to the $i$-th cell yields
$$
   \left\{
      \begin{array}{l}
        q_{CM1i}= q_{R \lambda (i-1)}-q_{R \lambda i}+q_{R2i}-q_{R3i}\\
        q_{C2i}=q_{Li}-q_{R2i}\\
        \varphi_{Li} = -\varphi_{C2i}-\varphi_{R1i}.
      \end{array}
    \right.
$$
In the following, to simplify notations, we will assume that all cells and all resistive interconnections are identical, i.e.,
$C_{1i}=C_1$, $C_{2i}=C_2$, $L_i=L$, $M_{\Phi i}=M_\Phi$, $R_{1i}=R_1$, $R_{2i}=R_2$, $R_{3i}=R_3$ and $R_{\lambda i}=R_\lambda$ for all $i=1,\dots,n$.
By using the CRs of circuit elements (Sect.\ \ref{sect:FCD}), we obtain the SEs in the FCD
\begin{equation}\label{SEsarrayChua}
   \left\{
      \begin{array}{ll}
        C_{1}\dot \varphi_{C1i} = -\varphi_{C1i}(\frac{1}{R_{2}}+\frac{1}{R_{3}}+\frac{2}{R_\lambda})+\frac{\varphi_{C2i}}{R_{2}}\\
        +\frac{\varphi_{C1(i-1)}}{R_\lambda} +\frac{\varphi_{C1(i+1)}}{R_\lambda} -\hat Q_{M}(\varphi_{C1i}+\Phi_{M0i}) \\
        +\hat Q_{M}(\Phi_{M0i})+C_{1} v_{C10i}\\
C_{2}\dot \varphi_{C2i} = \frac{\varphi_{C1i}}{R_2} -\frac{\varphi_{C2i}}{R_2}+q_{Li}+C_{2}v_{C20i}\\
        L\dot q_{Li} =-\varphi_{C2i} -R_{1}q_{L1i}+L i_{L0i}
      \end{array}
    \right.
\end{equation}
for $i=1,\dots,n$.

Comparing with (\ref{SEs FCD RLCM}),
we obtain
$$
H= \begin{pmatrix}
H_{\alpha \alpha} & H_{\alpha \beta}\\
-H_{\alpha \beta}^\top & H_{\beta \beta}
\end{pmatrix}= \left(
\begin{array}{c|c}
\begin{array}{c|c}
H_{\alpha \alpha 11} & H_{\alpha \alpha 12} \\
\hline
H_{\alpha \alpha 12}^\top & H_{\alpha \alpha 22}
\end{array} & H_{\alpha \beta} \\
\hline
-H_{\alpha \beta}^\top & H_{\beta \beta}
\end{array}
 \right)
$$
where
\begin{align*}
H_{\alpha \alpha 11}&=\mathrm{circ}\left( \frac{1}{R_2}+\frac{1}{R_3}+\frac{2}{R_\lambda}, -\frac{1}{R_\lambda}, 0, \dots, 0, -\frac{1}{R_\lambda}\right)\\
H_{\alpha \alpha 12}&=\mathrm{diag}\left(-\frac{1}{R_2}, \dots, -\frac{1}{R_2}\right) \\
H_{\alpha \alpha 22}&=\mathrm{diag}\left(\frac{1}{R_2}, \dots, \frac{1}{R_2}\right)\\
H_{\beta \beta}&=\mathrm{diag}\left(R_{1}, \dots,R_{1}\right)
\end{align*}
and
\begin{align*}
H_{\alpha \beta}&=\begin{pmatrix}
0_n \\
-E_n
\end{pmatrix}.
\end{align*}
Then, on the basis of (\ref{mixed n1}), we have
\begin{align*}
\begin{split}
&\cP(\varphi_{C1i},\varphi_{C2i},q_{Li},i=1,\dots,n;w_0)\\
=&
\left(\frac{1}{2R_2}+\frac{1}{2R_3}+\frac{1}{R_\lambda} \right) \sum_{i=1}^n \varphi_{C1i}^2 + \frac{1}{2R_2} \sum_{i=1}^n \varphi_{C2i}^2 \\
& -\frac{1}{R_\lambda} \left(\varphi_{C11}\varphi_{C1n} +  \sum_{i=1}^{n-1} \varphi_{C1i}\varphi_{C1(i+1)} \right) \\
&-\frac{1}{R_2} \sum_{i=1}^n \varphi_{C1i} \varphi_{C2i} -\frac{R_1}{2} \sum_{i=1}^n q_{Li}^2 -\sum_{i=1}^n q_{Li}\varphi_{C2i}\\
 &+\sum_{i=1}^n\int_0^{\varphi_{C1i}} \hat Q_{M}(\rho+\Phi_{M0i})d\rho\\
 &-\sum_{i=1}^n \hat Q_{M}(\varphi_{M0i})\varphi_{C1i} - C_1\sum_{i=1}^n v_{C10i} \varphi_{C1i} \\
 &-C_2\sum_{i=1}^n v_{C20i} \varphi_{C2i}+ L \sum_{i=1}^n i_{L0i}q_{Li}
\end{split}
\end{align*}
where we let $w_0=(v_{C10i}, v_{C20i}, i_{L0i},\Phi_{M0i})_{i=1,\dots,n}$.

It can be verified that the SEs can be written as
\begin{equation}\label{SEsarrayChuamixed}
   \left\{
      \begin{array}{ll}
        C_{1}\dot \varphi_{C1i} = -\frac{\partial}{\partial \varphi_{C1i}}
        \cP(\varphi_{C1i},\varphi_{C2i},q_{Li},i=1,\dots,n;w_0) \\
		C_{2}\dot \varphi_{C2i} = -\frac{\partial}{\partial \varphi_{C2i}}
		\cP(\varphi_{C1i},\varphi_{C2i},q_{Li},i=1,\dots,n;w_0) \\
        L \dot q_{Li} = \frac{\partial}{\partial q_{Li}}
		\cP(\varphi_{Ci},q_{Li},i=1,\dots,n;w_0)
      	\end{array}
   		 \right.
\end{equation}
for $i=1,\dots,n$.



\section{Conclusion}
\label{sect:concl}
The paper has introduced a mixed potential for a large class of nonlinear
circuits, named RLCM, containing resistors, capacitors, inductors and memristors,
under a completeness assumption for the circuit variables.
This extends to nonlinear circuits with memristors the mixed potential
originally introduced by Brayton and Moser for nonlinear RLC circuits without
memristors \cite{brayton1964theoryI,brayton1964theoryII}. The extension
has been obtained by representing a memristor circuit
in the FCD and is based on a fundamental equivalence principle between an
RLCM circuit in the FCD and an RLC circuit in the VCD. Basic properties
of the mixed potential have been established. These include a compact and
useful expression of the SEs in the FCD via the mixed potential, links between
the invariants of motion in the VCD and the mixed potential as well as the
relationships between the existence of the mixed potential and the
reciprocity of multiports used in the analysis.
The results are illustrated and discussed via the application to a number
of basic memristor circuits and large-size memristor circuits
with a neural architecture. In a companion paper \cite{DiMarco2026BraytonAppl},
currently in preparation,
the mixed potential is used to systematically establish Lyapunov-like
results on convergence for RLCM circuits. Another
topic deserving future research concerns
is the possibility to further extend the
mixed potential to memristor circuits that do not satisfy the assumption of
completeness enforced in this manuscript.

\section*{Availability of data and material} The datasets generated and
analyzed during the current study are available from the corresponding author on reasonable request.

\section*{Declarations}
{\bf Conflict of interest} All authors certify that they have no affiliations with or involvement in any organization or entity with any
financial interest or non-financial interest in the subject matter or
materials discussed in this manuscript.

\bibliographystyle{unsrt}


\end{document}